\renewcommand\footnotetextcopyrightpermission[1]{} 
\def\checkmark{\tikz\fill[scale=0.4](0,.35) -- (.25,0) -- (1,.7) -- (.25,.15) -- cycle;} 
\newcommand{\xmark}{\ding{55}}
\newcommand{\name}{{\sc Morshed}\xspace}
  \providecommand\BibTeX{{%
    \normalfont B\kern-0.5em{\scshape i\kern-0.25em b}\kern-0.8em\TeX}}}
\newcommand{\mustafa}[1]{\todo[inline,color=blue!40]{mustafa: #1}}
\newcommand{\saurabh}[1]{\todo[inline,color=red!40]{saurabh: #1}}
\newcommand{\issa}[1]{\todo[inline,color=orange!40]{issa: #1}}
\newcommand{\mustafa}[1]{}
\newcommand{\saurabh}[1]{}
\newcommand{\issa}[1]{}
\newcommand{\pn}[2]{{\color{red}{Parinaz (#1): #2}}}
\begin{document}
\sloppy

\title{\textit{\name}: Guiding Behavioral Decision-Makers towards Better Security Investment in Interdependent Systems}

\author{Mustafa Abdallah}
\email{abdalla0@purdue.edu}
\affiliation{%
  \institution{Purdue University}
}

\author{Daniel Woods}
\email{woods104@purdue.edu}
\affiliation{%
  \institution{Purdue University}
}

\author{Parinaz Naghizadeh}
\email{naghizadeh.1@osu.edu}
\affiliation{%
  \institution{Ohio State University}
}

\author{Issa Khalil}
\email{ikhalil@hbku.edu.qa}
\affiliation{%
  \institution{Qatar Computing Research Institute}
}

\author{Timothy Cason}
\email{cason@purdue.edu}
\affiliation{%
  \institution{Purdue University}
}

\author{Shreyas Sundaram}
\email{sundara2@purdue.edu}
\affiliation{\institution{Purdue University}}

\author{Saurabh Bagchi}
\email{sbagchi@purdue.edu}
\affiliation{\institution{Purdue University}}

\renewcommand{\shortauthors}{Trovato and Tobin, et al.}

\begin{abstract}
We model the {\em behavioral} biases of human decision-making in securing interdependent systems and show that such behavioral decision-making leads to a suboptimal pattern of resource allocation compared to non-behavioral (rational) decision-making. We provide empirical evidence for the existence of such behavioral bias model through a controlled subject study with 145 participants. 
We then propose three learning techniques for enhancing decision-making in multi-round setups. 
We illustrate the benefits of our decision-making model through multiple interdependent real-world systems and quantify the level of gain compared to the case in which the defenders are behavioral. We also show the benefit of our learning techniques against different attack models. We identify the effects of different system parameters on the degree of suboptimality of security outcomes due to behavioral decision-making. \end{abstract}

\begin{CCSXML}
<ccs2012>
<concept>
<concept_id>10002978.10003014</concept_id>
<concept_desc>Security and privacy~Network security</concept_desc>
<concept_significance>500</concept_significance>
</concept>
<concept>
<concept_id>10002978.10003029.10003031</concept_id>
<concept_desc>Security and privacy~Economics of security and privacy</concept_desc>
<concept_significance>500</concept_significance>
</concept>
</ccs2012>
\end{CCSXML}

\ccsdesc[500]{Security and privacy~Network security}
\ccsdesc[500]{Security and privacy~Economics of security and privacy}

\keywords{Behavioral decision-making, Guiding security decision-makers, Security games, Learning attacks, Reinforcement Learning.}

\maketitle
\section{Introduction}
Most of the current IT-based systems are becoming more complex, however they are facing sophisticated attacks from external adversaries where the attacker's goal is to breach specific (critical) assets within the system. 
For each critical asset, the attacker typically utilizes different vulnerabilities to compromise such asset. 
In this context, the system operators, Chief Information Security Officer or security executives have to judiciously allocate their (often limited) security budgets to reduce security risks of the systems they manage. This resource allocation problem is further complicated by the fact that a large-scale system consists of multiple interdependent subsystems managed by different operators, with each operator in charge of securing her own subsystem. 

Prior work has considered such security decision-making problems in both decision-theoretic and game-theoretic settings~\cite{laszka2015survey,yan2012towards} in which the security risk faced by an operator (defender) depends on her security investments. However, most of the existing work relied on \emph{classical models} of decision-making, where all defenders and attackers are assumed to make fully rational risk evaluations and security decisions~\cite{laszka2015survey, hota2016optimal,modelo2008determining}. 

On the contrary, behavioral economics has shown that humans consistently deviate from these classical models of decision-making. Most notably, research in {\em behavioral economics}, has shown that humans perceive gains, losses and probabilities in a skewed, nonlinear manner~\cite{kahneman1979prospect}. In particular, humans typically overweight low probabilities and underweight high probabilities, where this weighting function has an inverse S-shape, as shown in Figure~\ref{fig:Prelec Probability weighting function}. Many empirical studies (e.g., \cite{gonzalez1999shape,kahneman1979prospect}) have provided evidence for this class of behavioral models. These effects are relevant for evaluating security of such systems in which decisions on implementing security controls are not made purely by automated algorithms, but rather through human decision-making, albeit with help from threat assessment tools \cite{sheyner2002automated,jauhar2015model}. 

There are many articles discussing the prevalence of human factors in security decision-making, both in popular press and in academic journals~\cite{dor2016model}, none of which however shed light on the impact of cognitive biases on the overall system security and how we can mitigate such biases. Our work bridges this gap by showing how behavioral research can lead to better security decision-making for interdependent systems. Specifically, we study the effect of the aforementioned human behavioral decision-making bias on security allocations and propose multiple techniques to overcome such bias in both single-round and multi-round setups. 

There are recent  works~\cite{7544460,sanjab2017prospect} that have started to leverage mathematical analysis to model and predict the effect of behavioral decision-making on the players' investments.  However, these works have the following limitations. First, they have considered the impact of probability weighting in certain specific classes of interdependent security games. Second, these works did not consider multiple-round setups in which defenders can learn. In contrast to those, we consider general defense allocation techniques that can be applied to any system where its failure scenarios are modeled by an attack graph, and we propose multi-round learning algorithms to guide behavioral decision-makers in different setups  and consider different types of attackers.  The difference between \name \footnote{Morshed is an Arabic word with the meaning of guiding people to the right place.} and previous related work is shown in Table~\ref{tbl:morshed_related_work}.


\noindent {\bf Our contributions}: \\
In this paper, we first study the effects of human behavioral decision-making on the security of interdependent systems with multiple defenders where each defender is responsible for defending a set of assets (i.e., a subnetwork of the whole system network). In interdependent systems, stepping-stone attacks are often used by external attackers to exploit vulnerabilities within the network in order to reach and compromise critical targets.  These stepping-stone attacks can be captured via {\it attack graphs}, representing all possible paths an attacker may take to reach targets within the system~\cite{modelo2008determining,homer2013aggregating}. 

We design a reasoning and security investment decision-making technique that we call \name 
pronounced as \textit{M-or-Sh-ed}. 
We first describe the model consisting of multiple behavioral defenders and an attacker, in which the interdependencies between the defenders' assets are captured via an attack graph by proposing a \emph{behavioral security game} model. We show that behavioral decision-making leads to suboptimal resource allocation compared to non-behavioral decision-making. We then propose different learning-based techniques for guiding behavioral decision-makers  towards optimal investment decisions for two different scenarios where each scenario represents whether the defender has knowledge of the adversary's history (i.e., chosen attack paths in previous rounds) or not. Our proposed techniques enhance the implemented security policy (in terms of reducing the total system loss when compromised by allocating limited security resources optimally).
\name has components for both single-round and multi-round setups as shown in Figure~\ref{fig:System_High_level_overview}. We consider two classes of defenders. 

\noindent\textit{\bf Behavioral defenders}: These defenders make security investment decisions under two types of cognitive biases. First, following prospect-theoretic,  non-linear probability weighting models, they misperceive the probabilities of a successful attack on each edge of the attack graph. Second, they have a bias toward spreading their budget so that a minimum, non-zero investment is allocated to each edge of the attack graph. This second kind of bias is motivated by behavior that we observe in our human subject experiments (see Section~\ref{sec:human-experiments}). 

\noindent \textit{\bf Non-behavioral or rational defenders}: These defenders make security investment decisions based on the classical models of fully rational decision-making. Specifically, they correctly perceive the risk on each edge within the attack graph of the system network.

On the other hand, almost all research that have considered behavioral economics in security and privacy has the common theme of considering individual choices regarding privacy and how people treat their own personal data~\cite{acquisti2009nudging} or  entirely based on psychological studies \cite{anderson2012security}. To the best of our knowledge, none of these research considered the defense choices made by people in organizational contexts with interdependent system under control. On the contrary, our work considers scenarios that can be applied to critical infrastructure systems (e.g., cyber-physical systems).
\begin{table}[t]
\caption {Comparison between the prior related work and \name in terms of the available features.}
\label{tbl:morshed_related_work}
\centering
\resizebox{\columnwidth}{!}
{%
\begin{tabular}{|l|l|l|l|l|l|l|}
\hline
\multicolumn{1}{|l|}{\text{\bf System}}
& \multicolumn{1}{l|}{\bf \shortstack{Multiple \\ Defenders}}
& \multicolumn{1}{l|}{\bf \shortstack{Interdependent \\ subnetworks}}
& \multicolumn{1}{l|}{\bf \shortstack{Analytical \\ Framework}}
& \multicolumn{1}{l|}{\bf \shortstack{Behavioral \\ Biases}}
& \multicolumn{1}{l|}{\bf \shortstack{Various Attack \\ Types}}
& \multicolumn{1}{l|}{\bf \shortstack{Multiple \\ Rounds}}\\
\cline{1-6}
\hline
RAID08~\cite{modelo2008determining}, MILCOM06~\cite{lippmann2006validating} & \xmark  & \checkmark & \xmark & \xmark  & \xmark & \xmark\\
\hline
S\&P02~\cite{sheyner2002automated},  CCS12~\cite{yan2012towards}  & \xmark & \xmark & \checkmark & \xmark  & \xmark & \xmark \\
\hline
S\&P09~\cite{acquisti2009nudging}, EC18~\cite{redmiles2018dancing}, ACSAC12~\cite{anderson2012security} & \xmark & \xmark & \xmark & \checkmark & \xmark & \xmark \\  
\hline
ICC17~\cite{sanjab2017prospect} & \xmark  & \checkmark  &  \checkmark & \checkmark & \xmark & \xmark \\
\hline
TCNS20~\cite{abdallah2020behavioral}, TCNS18~\cite{7544460} & \checkmark & \checkmark & \checkmark & \checkmark & \xmark & \xmark \\
\hline
\name & \checkmark & \checkmark & \checkmark & \checkmark & \checkmark & \checkmark \\
\hline
\end{tabular}}
\vspace{-0.1in}
\end{table}

We perform a human subject study with N = 145 participants where they choose defense allocations in two simple attack graphs. We then evaluate \name using five synthesized attack graphs that represent realistic interdependent systems and attack paths through them. These systems are DER.1~\cite{jauhar2015model}, (modelled by NESCOR), SCADA industrial control system, modeled using NIST guidelines for ICS~\cite{hota2016optimal}, IEEE 300-bus smart grid~\cite{khanabadi2012optimal}, E-commerce~\cite{modelo2008determining}, and VOIP~\cite{modelo2008determining}. 
We do a benchmark comparison with two prior solutions for optimal security controls with attack graphs \cite{sheyner2002automated, lippmann2006validating}, and quantify the level of the underestimation of loss compared to the \name evaluation where defenders are behavioral. In conducting our analysis and obtaining these results based on a behavioral model, we address several domain-specific challenges in the context of security of interdependent systems. These include augmenting the attack graph with certain parameters such as sensitivity of edges to security investments (Equation~\ref{eq:expon_prob_func}), the estimation of baseline attack probabilities (Table~\ref{tbl:cvss_cve_der_scada}) and the types of defense mechanisms (Section~\ref{sec: eval_multiple_def}) in our formulations.
\begin{figure}
\centering
  \includegraphics[width=0.9\linewidth]{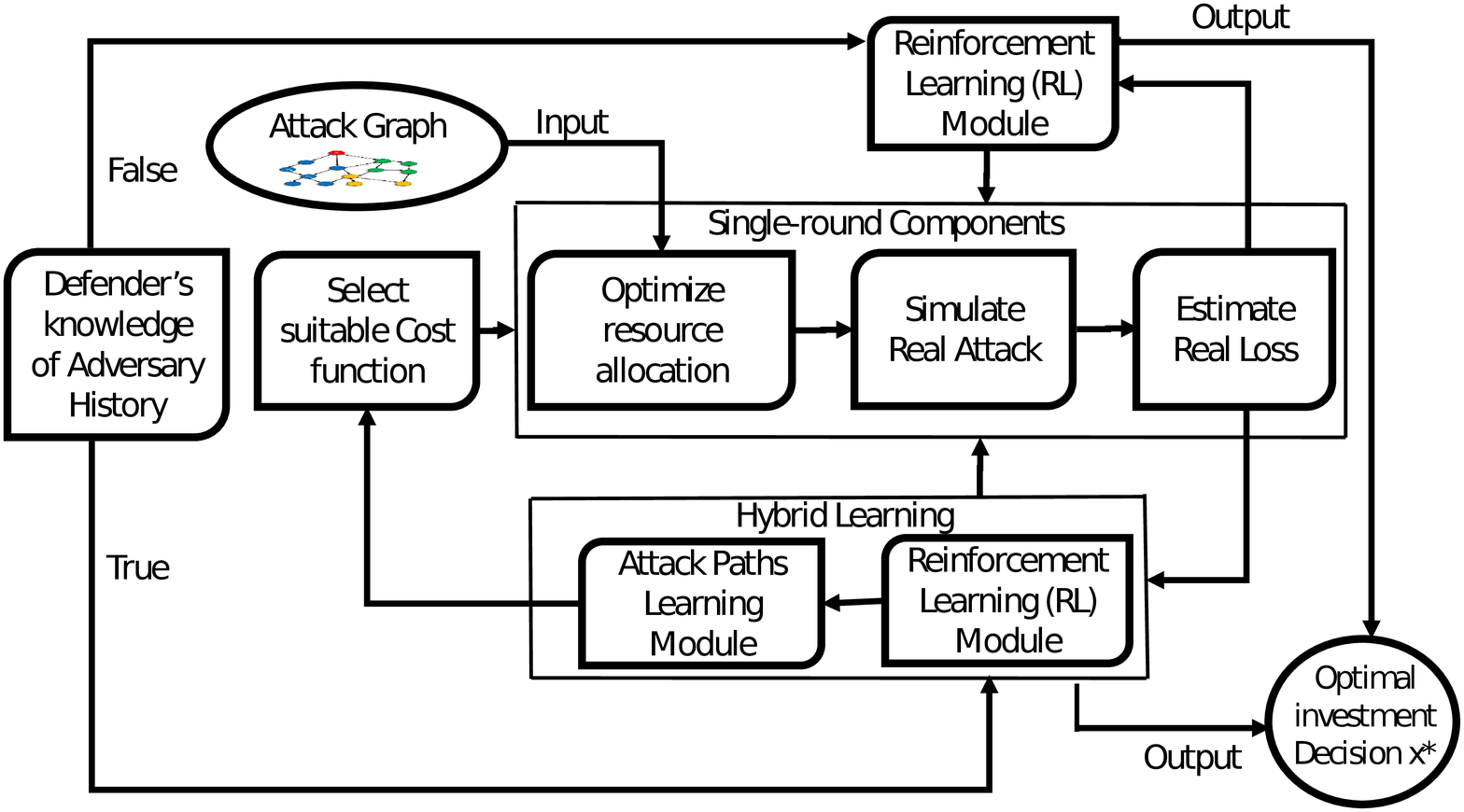}
  \caption{A high level overview of \name's flow,  available features and main components (e.g., single-round and Hybrid Learning).}
   \label{fig:System_High_level_overview}
\vspace{-0.1in}
\end{figure}

In summary, this paper makes the following contributions: 
\vspace{-3pt}

\begin{enumerate}[noitemsep,topsep=2pt,parsep=0pt,partopsep=2pt,leftmargin=*]
\item We propose a \textit{security investment guiding} technique for the defenders of interdependent systems where defenders' assets have mutual interdependencies. We show the effect of {\em behavioral} biases of human decision-making on system security and we quantify the level of gain due to our decision-making technique where defenders are behavioral.

\item We validate the existence of bias via a controlled subject study and illustrate the benefits of our decision-making through multiple real-world interdependent systems. We also analyze the different system parameters that affect the security of interdependent systems under our behavioral model. 

\item We propose three learning techniques to improve defense decisions in multi-round scenarios against different attack models that affect the security of interdependent systems. 
We incorporate such effects with behavioral decision-making.
\end{enumerate}

\section{Background and PROBLEM SETUP}
\label{sec:model}
We begin by presenting a background on behavioral security games, establishing a theoretical basis that can be used to model any multi-defender interdependent system. A simple example of our setup is shown in Figure \ref{fig:Overview attack Graph}, which represents a system consisting of 3 interdependent defenders. An external attacker aims to exploit vulnerabilities within the network in order to reach and compromise critical targets \cite{hota2016optimal,jauhar2015model}. We formalize the attacker and defenders' goals and actions in this section. 

\begin{figure}[t]
\begin{center}
  \includegraphics[width=0.6\linewidth]{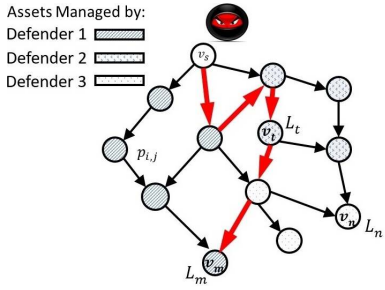}
  \caption{Overview of the interdependent security framework. The interdependencies between assets are represented by edges. 
  An attacker tries to compromise critical assets using stepping stone attacks starting from node $v_s$. The bold (red) edges show one such attack path.}
  \label{fig:Overview attack Graph}
\end{center}
\vspace{-0.08in}
\end{figure}

\subsection{Threat Model}\label{sec:threat-model}

We study security games consisting of one attacker and multiple defenders interacting through an attack graph ${G}=({V}, \mathcal{E})$. The nodes ${V}$ of the attack graph represent the assets in the system, while the edges $\mathcal{E}$ capture the attack progression between the assets. In particular, an edge from $v_i$ to $v_j$, $(v_i,v_j)\in \mathcal{E}$, indicates that if asset $v_i$ is compromised by the attacker, it can be used as a stepping stone to launch an attack on asset $v_j$ (e.g., if an attacker gains the password required to access a power plant's control software ($v_i$), it can use it to attempt to alter the operation of a generator ($v_j$)). The baseline probability that the attacker can successfully compromise $v_j$ given that it has compromised $v_i$, is denoted by the edge weight $p^0_{i,j}\in [0,1]$. By ``baseline probability'' we mean the probability of successful compromise without any security investment in protecting the assets. The attacker initiates attacks on the network from a source node $v_s$ (or multiple possible source nodes), and aims to reach a target node $v_m \in V_k$, i.e., a critical node for defender $D_k$.

\subsection{Defense Model}\label{sec:defense-model}
Each defender $D_{k}\in D$ is in control of a subset of assets $V_k\subseteq V$. This is motivated by the fact that a large system comprises a number of smaller subnetworks, each owned by an independent stakeholder. Among all the assets in the network, a subset $V_m\subseteq V$ are \emph{critical} assets, the compromise of which entails a financial loss for the corresponding defender. Specifically, if asset $v_m\in V_m$ is compromised by the attacker, any defender $D_{k}$ for whom $v_m\in V_k$ suffers a financial loss $L_m\in \mathbb{R}_{>0}$. 

To protect the critical assets from being reached through stepping stone attacks, the defenders can choose to invest their resources in strengthening the security of the edges in the network. Specifically, let $x^k_{i,j}$ denote the investment of a defender $D_k$ on edge $(v_i,v_j)\in \mathcal{E}_k$, and let $x_{i,j}=\sum_{D_k \in D} x^k_{i,j}$ be the total investment on that edge by all eligible defenders. Then, the probability of successfully compromising $v_j$ starting from $v_i$ is given by $p_{i,j}(x_{i,j})$. In addition, let $s_{i,j}\in[1,\infty)$ denote the sensitivity of edge $(v_i,v_j)$ to the total investment $x_{i,j}$. For larger sensitivity values, the probability of successful attack on the edge decreases faster with each additional unit of security investment on that edge; in other words, edges that are easier to defend will have larger sensitivity. 

Let $P_m$ be the set of all attack paths from $v_s$ to $v_m$. The defender assumes the worst-case scenario, i.e., the attacker\footnote{Our formulation also captures the case where each defender faces a different attacker who exploits the most vulnerable path from the source to that defender's assets.} exploits the most vulnerable path to each target.%
\footnote{We will consider different types for the attacker (with partial knowledge) in Section~\ref{sec:learning_rounds}.} Note that previous works considered such adversary model that chooses the most vulnerable path to target assets (e.g.,~\cite{hota2016optimal,laszka2015survey}).
Mathematically, this can be captured via the following total loss function for $D_k$: 
\begin{equation}\label{eq:defender_utility}
\hat{C}_{k}(\mathbf{x}) = \sum_{v_{m} \in V_{k}} L_{m} \hspace{0.3mm} \Big( \hspace{0.3mm} \underset{P \in P_{m}}{\text{max}}\prod_{(v_{i},v_{j}) \in P} p_{i,j}(x_{i,j}) \hspace{0.3mm} \Big)~.
\end{equation}

We let the probability of successfully compromising $v_j$ starting from $v_i$ be given by, 
\begin{equation}\label{eq:expon_prob_func}
p_{i,j}(x_{i,j})= p_{i,j}^0\exp{\Big(-  s_{i,j} \sum_{D_{k} \in {D} \text{ s.t. } (v_i,v_j)\in \mathcal{E}_k} { x^{k}_{i,j}}\Big)}.
\end{equation}
That is, the probability of successful attack on an edge $(v_i,v_j)$ decreases exponentially with the sum of the investments on that edge by all defenders. This probability function falls within a class commonly considered in security economics (e.g., \cite{gordon2002economics,hota2016optimal}). 

%
\begin{figure}[t]
\begin{center}
  \includegraphics[width=0.59\linewidth]{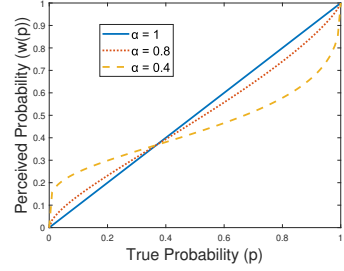}
  \caption{Prelec Probability weighting function which transforms true probabilities $p$  into perceived probabilities $w(p)$. The parameter $\alpha$ controls the extent of overweighting and underweighting, with $\alpha = 1$ indicating non-behavioral or rational decision-making.}
  \label{fig:Prelec Probability weighting function}
\end{center}
\vspace{-0.083in}
\end{figure}

\subsection{Behavioral Probability Weighting}
As mentioned in the Introduction, the  behavioral economics literature has shown that humans consistently misperceive probabilities by overweighting low probabilities, and underweighting high probabilities \cite{kahneman1979prospect,prelec1998probability}.  More specifically, many humans perceive a ``true'' probability $p$ as probability $w(p)$, where $w(\cdot)$ is known as a \emph{probability weighting function}.  A commonly studied functional form for this weighting function was formulated by Prelec in \cite{prelec1998probability}, shown in Figure~\ref{fig:Prelec Probability weighting function}, and is given by
\begin{equation}\label{eq:prelec}
w(p) = \exp{\Big[-(-\log(p)\hspace{0.2mm})^{\alpha}\hspace{0.5mm} \Big]},  \hspace{3mm} p\in [0,1],
\end{equation}
where $\alpha \in (0,1]$ is a parameter that controls the extent of misperception. 
When $\alpha = 1$, we have $w(p) = p$ for all $p \in [0,1]$, which corresponds to the situation where probabilities are perceived correctly, i.e., a non-behavioral defender.  

\subsection{Perceived Costs of a Behavioral Defender} 
We now incorporate this probability weighting function into the security game of Section \ref{sec:defense-model}. 
In a \emph{behavioral security game}, each defender misperceives the attack success probability on {each edge} according to the probability weighting function in \eqref{eq:prelec}. 
She then chooses her investments $x_k:=\{x^k_{i,j}\}_{(v_i,v_j)\in \mathcal{E}_k}$ to minimize her \emph{perceived} loss
\begin{equation}\label{eq:defender_utility_edge}
C_{k}(x_{k},\mathbf{x}_{-k}) = \sum_{v_{m} \in V_{k}} L_{m} \Big(  \underset{P \in P_{m}}{\text{max}}\prod_{(v_{i},v_{j}) \in P} w\left(p_{i,j}(x_{i,j}) \right) \Big)~,
\end{equation}
subject to her total security investment budget $B_k$, i.e., $\sum_{(v_i,v_j)\in \mathcal{E}_k} x^k_{i,j} \leq B_k$, and non-negativity of the investments, i.e., $x^k_{i,j}\geq 0$. We prove the convexity of the total loss~\eqref{eq:defender_utility_edge} in  Appendix~\ref{sec:proof-convexity}.

\begin{figure*}[t] 
\centering
\begin{subfigure}[t]{.48\textwidth}
\centering
\begin{tikzpicture}[scale=0.5]

\tikzset{edge/.style = {->,> = latex'}};

\node[draw,shape=circle] (vs) at (-2,0) {$v_s$};
\node[draw,shape=circle] (v1) at (0,0) {$v_1$};
\node[draw,shape=circle] (v2) at (2,1) {$v_2$};
\node[draw,shape=circle] (v3) at (2,-1) {$v_3$};
\node[draw,shape=circle] (v4) at (4,0) {$v_4$};
\node[draw,shape=circle] (v5) at (6,0) {$v_5$};
\node[shape=circle] (L1) at (6,1) {$L_5=1$};

\draw[edge,thick] (vs) to (v1);
\draw[edge,thick] (v1) to (v2);
\draw[edge,thick] (v1) to (v3);
\draw[edge,thick] (v2) to (v4);
\draw[edge,thick] (v3) to (v4);
\draw[edge,thick] (v4) to (v5);
\end{tikzpicture}
\caption{An attack graph with a min-cut edge.} 
\label{fig:split_join_dependence_before}
\end{subfigure}
\begin{subfigure}[t]{.48\textwidth}
\centering
\begin{tikzpicture}[scale=0.5]

\tikzset{edge/.style = {->,> = latex'}};

\node[draw,shape=circle] (vs) at (0,0) {$v_1$};
\node[draw,shape=circle] (v1) at (2,1) {$v_2$};
\node[draw,shape=circle] (v2) at (2,-1) {$v_3$};
\node[draw,shape=circle] (v3) at (4,0) {$v_4$};

\node[shape=circle] (L1) at (4,1) {$L_4=1$};

\draw[edge,thick] (vs) to (v1);
\draw[edge,thick] (vs) to (v2);
\draw[edge,thick] (v1) to (v2);
\draw[edge,thick] (v1) to (v3);
\draw[edge,thick] (v2) to (v3);
\end{tikzpicture}
\caption{An attack graph with a cross-over edge.} 
\label{fig:cross_over_edge_graph}
 \end{subfigure} 
\caption{The attack graph in (a) is used to illustrate the sub-optimal investment decisions of behavioral defenders. The attack graph in (b) is used in the human subject experiment to isolate the spreading effect.}
\label{fig:MPNE_main}
\vspace{-2mm}
\end{figure*}
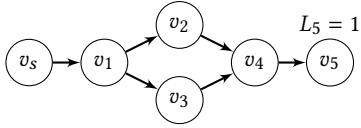
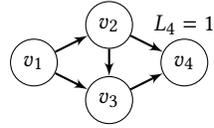

\subsection{Spreading Nature of Security Investments}
We augment our model with another aspect of behavioral decision-making, which we call {\em spreading}. A defender with this characteristic spreads some of her investments on 
all edges of the attack graph, even when some edges are unlikely to be exploited for attacks. 
Spreading here is inspired by {\em Na\"ive Diversification}~\cite{benartzi2001naive} 
from behavioral economics, where humans have a tendency to split investments evenly over the available options.
This phenomenon has not been reported earlier for security decision-making, to the best of our knowledge, and we infer this behavior from our human subject study (detailed in Section~\ref{sec:human-experiments}).
We capture this effect by adding another constraint to our model in (\ref{eq:defender_utility_edge}): for each defender $D_k$, we set $x^k_{i,j} \geq \eta_k$, where $\eta_k$ is the minimum investment $D_k$ makes on any edge. The value $\eta_{k} = 0$ gives us the behavioral decision with no spreading, i.e., with only behavioral probability weighting.  

\section{Human Subject Study}\label{sec:human-experiments}

\begin{figure*}[t] 
\begin{minipage}[t]{1.0\textwidth}
\begin{minipage}[t]{.47\textwidth}
\centering
   \includegraphics[width=\linewidth]{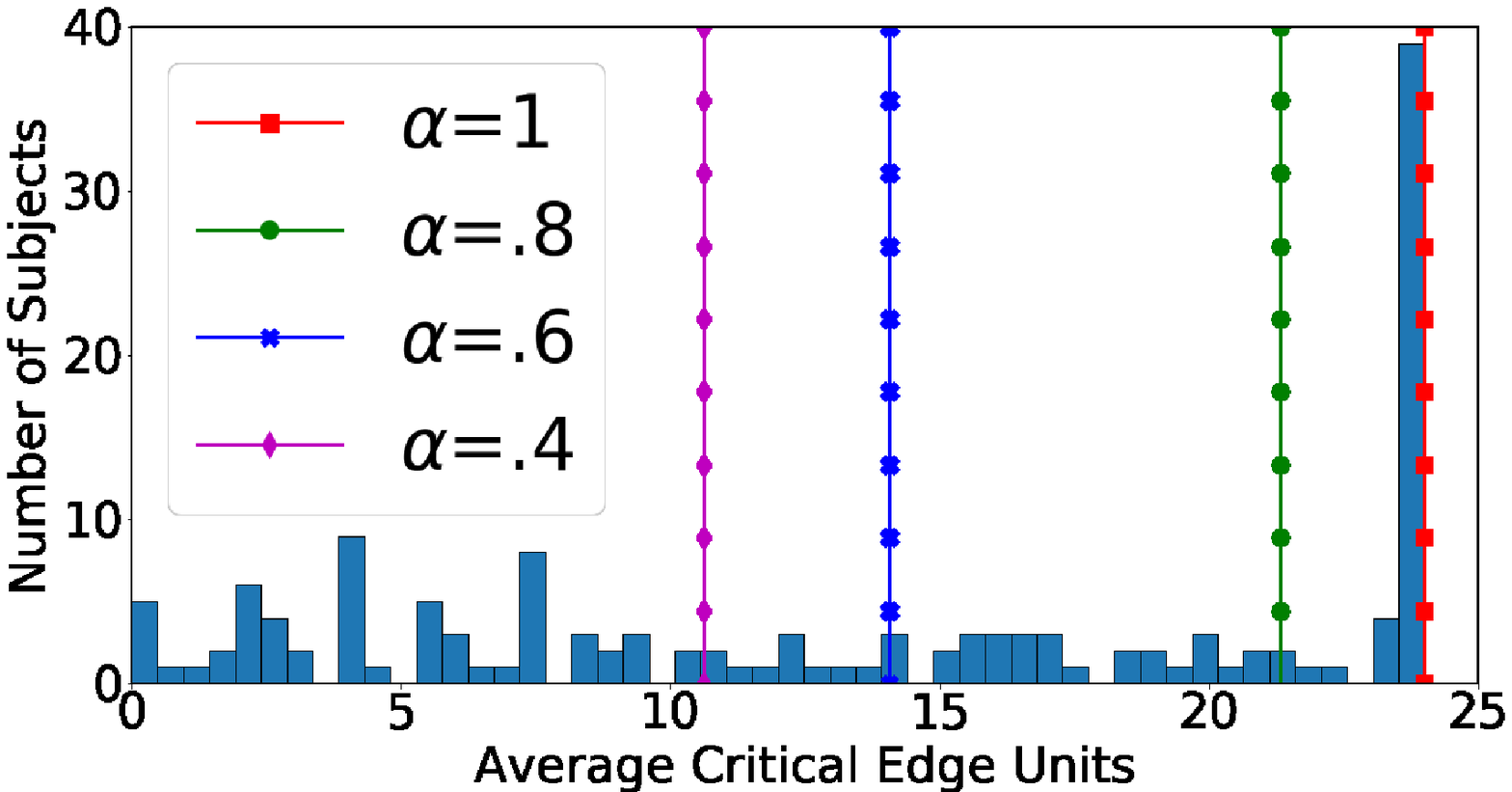}
  \caption{Subjects' investments on the critical edge. Vertical lines with dots show optimal allocations at specific behavioral levels ($\alpha$).}
  \label{fig:Avg_crit_red}
\end{minipage}\hfill
\begin{minipage}[t]{.47\textwidth}
\centering
   \includegraphics[width=\linewidth,height=55mm,keepaspectratio]{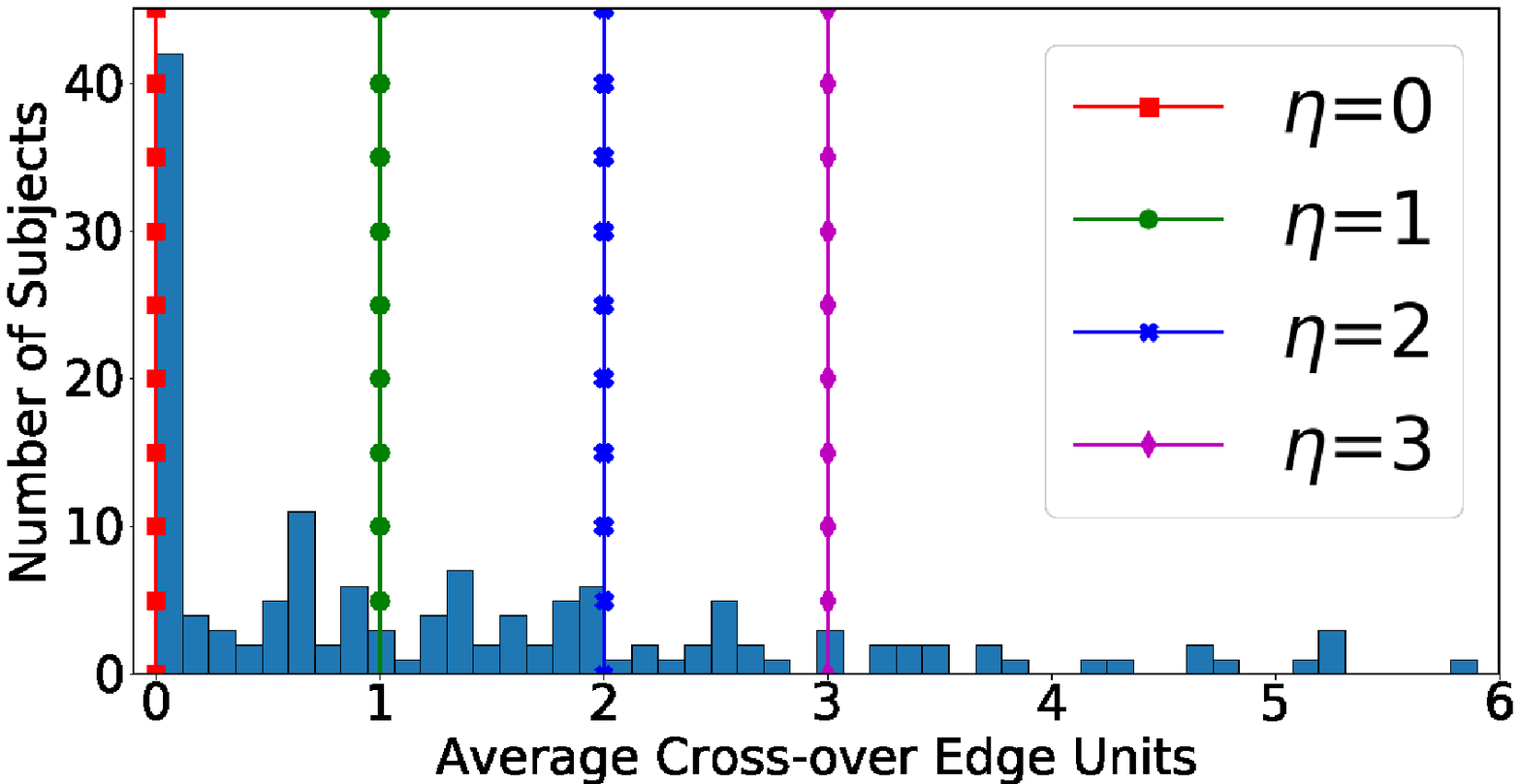}
  \caption{Subjects' investments on the cross-over edge. Vertical lines with dots show optimal allocations at specific spreading levels ($\eta$).}
  \label{fig:Avg_crit_blue}
\end{minipage} 
  \end{minipage}
  \vspace{-1mm}
\end{figure*}

To validate the existence of behavioral bias in security allocations (captured by our model in Section~\ref{sec:model}), incentivized experiments were conducted on 145 students in an Experimental Economics Laboratory at a large public university. Subject demographics are presented in Appendix~\ref{app: human-exp-extended}. 
Subjects participated in the role of a defender, and allocated 24 discrete defense units over edges in each network. 
Subjects made their decisions on a computerized interface, and faced 10 rounds for each network, receiving feedback after each round indicating whether the attack was successful or not (i.e., the asset was compromised).
Subjects received comprehensive written instructions on the decision environment that explained how their investment allocation mapped into the probability of edge defense, and what was considered a successful defense. Subjects received a base payment of \$5.00 for their participation. In addition, we randomly selected one round from each network and if the subject successfully defended the critical node in that round, she received an additional payment of \$7.50. 
 
\subsection{Network (A) with Critical Edge}
\label{sec:human-experiments-A}
This human experiment is on a network similar to Figure \ref{fig:split_join_dependence_before}, except that there is only one critical edge $(v_4,v_5)$ i.e., $v_s = v_1$. Figure \ref{fig:Avg_crit_red} shows the average investment allocation to the critical edge, based on 1450 investment decisions (i.e., 10 decisions 
from each of the 145 subjects). It shows the proportion of subjects who are non-behavioral (those at the vertical red line of $\alpha = 1$, 27\%), as well as heterogeneity in $\alpha$, with observations further to the left being more behavioral. 
Subjects to the left of the $\alpha=0.4$ line (approximately 10 units allocated to the critical edge) are not necessarily exhibiting $\alpha < 0.4$. Those who allocate between 5 and 10 units to the critical edge could have a strong preference for spreading. 
  We observe that after round 4, the average investment on the critical edge in each round is higher than the initial investment in round 1 (Figure~\ref{fig:agg_multirounds} in Appendix~\ref{app: evaluation_extended}). The average increase summed across the 10 rounds is one defense unit. This means that subjects become less behavioral on average through learning. 


\subsection{Network (B) with Cross-over Edge}
This experiment used the attack graph from Figure \ref{fig:cross_over_edge_graph}. This attack graph is suitable to separate the spreading behavioral bias from the behavioral probability weighting, since for any $0 < \alpha \leq 1$, the optimal decision is to put zero defense units on the cross-over edge ($v_2,v_3$).
Figure \ref{fig:Avg_crit_blue} shows the average investment allocation on the cross-over edge based on 1450 investment decisions. We see that the proportion of subjects that are non-behavioral, i.e., invest nothing on the cross-over edge, is 29\%. We observe that the average of subjects' investments on the cross-over edge in each round, shows a weak downward trend (Figure~\ref{fig:agg_multirounds_blue} in Appendix~\ref{app: evaluation_extended}). Taken together, these human experiments provide support for our behavioral model with probability weighting and spreading factors.

\noindent \textbf{Generalizability of the study}:
The applicability of this subject study to security experts is motivated by the fact that numerous academic studies of even the most highly-trained specialists have shown that experts too have susceptibility to systematic failures of human cognition (e.g., \cite{haynes2012test,frechette2015handbook}). In the meta-review article~\cite{frechette2015handbook}, 9 of 13 studies that make a direct comparison between student and professional subject pools find no evidence of differing behavior, and only 1 out of 13 studies finds that professionals behave more consistently with theory. Moreover, recent research has shown that cybersecurity professionals’ probability perceptions are as susceptible to systematic biases as those of the general population~\cite{mersinas2015experimental, Mersinas2016AreTest}. Finally, even if security experts exhibit weaker biases, this can result in sub-optimal security investments and their effects may be magnified due to the magnitude of losses associated with compromised `real-world' assets. 
\section{Effect of Bias on Investments}\label{sec:theory}

In this section, we provide a simple example to illustrate the investment decisions by behavioral and non-behavioral defenders, and provide some intuition on why the optimal defense strategies under the two decision-making models differ.
In this example, we use the notion of a {\it min-cut} of the graph. Specifically, given two assets $s$ and $t$ in the graph, an edge-cut is a set of edges $\mathcal{E}_{c} \subset \mathcal{E}$ such that removing $\mathcal{E}_c$ from the graph also removes all paths from $s$ to $t$.  A min-cut is an edge-cut of smallest cardinality over all possible edge-cuts. As the example will show, the optimal investments by a non-behavioral defender (i.e., $\alpha = 1$) will generally concentrate the security investments on certain critical (i.e., min-cut) edges in the network.   
In contrast, behavioral defenders tend to spread their budgets throughout the network. 

Consider the attack graph shown in Figure \ref{fig:split_join_dependence_before}, with a single defender $D$ and a single target asset $v_5$ (with a loss of $L_5 = 1$ if successfully attacked).  Let the defender's budget be $B$, and let the probability of successful attack on each edge $(v_i, v_j)$ be given by $p_{i,j}(x_{i,j}) = e^{-x_{i,j}}$ (assuming $p_{i,j}^0 = 1$). This graph has two possible min-cuts, both of size $1$: the edge $(v_s, v_1)$, and the edge $(v_4, v_5)$.
The total loss function \eqref{eq:defender_utility} for the defender is given by
\begin{small}
\begin{align*}
C(\textit{x}) =  \max \left(e^{-(x_{s,1} + x_{1,2} + x_{2,4} + x_{4,5})}, e^{-(x_{s,1} + x_{1,3} + x_{3,4} + x_{4,5})}\right),
\end{align*}%
\end{small}
which reflects the two paths from the source $v_s$ to the target $v_t$. We note that the optimal solution of this constrained convex optimization problem satisfies the KKT conditions \cite{hillier2012introduction}. One can then verify (using KKT conditions \cite{hillier2012introduction}) that it is optimal for a non-behavioral defender to put all of her budget only on the min-cut edges, i.e., any solution satisfying $ x_{s,1} + x_{4,5} = B $ and $ x_{1,2}=x_{2,4}=x_{1,3}=x_{3,4}=0 $ is optimal.  The intuition of the above result is that from a non-behavioral defender's viewpoint, the probability of successful attack on any given path is a function of the sum of the security investments on the edges in that path. Thus, any set of investments on min-cut edges would be optimal since the sum of investments would be the whole security budget on each path of the graph. 

Now, consider a behavioral defender, i.e., a defender with $\alpha < 1$.  With the above expression for $p_{i,j}(x_{i,j})$ and using the Prelec function \eqref{eq:prelec}, we have
$w(p_{i,j}(x_{i,j})) = e^{-x_{i,j}^{\alpha}}$. 
Thus, the total (perceived) loss function \eqref{eq:defender_utility_edge} for a behavioral defender is 
\begin{align*}
C(x) =  \max \left(e^{-x_{s,1}^{\alpha} - x_{1,2}^{\alpha} - x_{2,4}^{\alpha} - x_{4,5}^{\alpha}},~ 
e^{-x_{s,1}^{\alpha} - x_{1,3}^{\alpha} - x_{3,4}^{\alpha} - x_{4,5}^{\alpha}}\right),
\end{align*}
which includes the two paths from the source $v_s$ to the target $v_5$. Again, one can verify (using the KKT conditions \cite{hillier2012introduction}) that the optimal investments are  
\vspace{-2mm}
\begin{align*}
x_{1,2} &= x_{2,4} = x_{1,3} = x_{3,4} = 2^{\frac{1}{\alpha-1}} x_{s,1} .\\
x_{s,1} &= x_{4,5} = \tfrac{B-4x_{1,2}}{2}= \tfrac{B}{2+4( 2^{\frac{1}{\alpha-1}}) }. 
\end{align*}

Comparing these two cases, 
the optimal investments of the non-behavioral defender yield a total loss of $ e^{-B}$, whereas the investments of the behavioral defender yield a total loss of $e^{-2^\frac{\alpha}{\alpha-1}} e^{-\frac{B}{1 +  2^{\frac{\alpha}{\alpha-1}}}}$, which is larger than that of the non-behavioral defender. 

\textbf{Interpretation:} The reason for this discrepancy can be seen by examining the Prelec probability weighting function in Figure \ref{fig:Prelec Probability weighting function}.  Specifically, when considering an undefended edge (i.e., whose probability of successful attack is $1$), the marginal reduction of the attack probability on that edge as {\it perceived} by a behavioral defender is much larger than the marginal reduction of true attack probability on that edge. Thus the behavioral defender is incentivized to invest some non-zero amount on that edge. Therefore, a behavioral defender splits her investments among the two non-critical sub-paths in the attack path. Note that the same insight holds for different baseline probabilities, but this shifting effect is greater when the slope of the behavioral probability weighting curve is higher (i.e., close to values of 1, 0, or where the cross-over happens between the behavioral curve and the diagonal).
A rational defender, on the other hand, correctly perceives the drop in probability, and thus prefers not to invest on the non-critical sub-paths
, instead placing her investment only on the critical edges $(v_s,v_1)$ or $(v_4,v_5)$ or both.

In the above example, we assumed all edges have the same sensitivity to investments. We provide the analysis for different edges' sensitivities 
in Appendix~\ref{app:sensitivity}.

\section{Learning Over Rounds}\label{sec:learning_rounds}

Here we consider a defender who plays multiple rounds of the game, learning from observing the attack in each round. In each round, each defender plays the single-shot game with the attacker, allocating all her security budget. She then uses information collected during this interaction to inform her future decisions. In particular, we consider two different forms of learning: (1) what can the defender learn about an attacker over time, and (2) how can repeated interactions lead to decrease in the defenders' extent of behavioral decision-making (i.e., increase in $\alpha$)? 
We answer these questions through casting them as repeated resource allocation and reinforcement learning problems, respectively. 
\subsection{Learning about the Attacker}\label{sec:learn-attacker}
Now, we assume that the defender can observe the attacker’s past actions, e.g., via an intrusion detection system~\cite{modelo2008determining} or user metrics~\cite{xie2010using}. 

We propose an algorithm through which the defender learns the attack paths over time, and distributes her investments optimally accordingly over the edges. 
In particular, the steps of this algorithm for this defense technique, as outlined in Algorithm~\ref{alg:attacker}, are as follows. First, for each round, we compute the empirical frequency of the attacker's actions over the past $N$ moves (i.e., the probability of choosing every attack path based on the most recent $N$ choices). Then, we compute the best response of the defender to a modified version of the cost $C_k(x_k)$: this is a weighted version of the cost where each path $P$ has a weight $\beta_P$ (computed from the previous step). The  complexity of the algorithm therefore depends on the number of attack paths. 

In Section~\ref{sec:evaluation}, we compare the investment decisions prescribed by Algorithm~\ref{alg:attacker} with those from our earlier single-shot setup where the defender exhibits no learning. In these comparisons, we consider three types of attackers: replay attackers, randomizing attackers, and adaptive attacker. Specifically, a \textit{replay attacker} chooses the same attack path for every critical asset in every round. 
Such behavior may be due to limited observations~\cite{alpcan2006intrusion}, or when the attack process is automated.
A \textit{randomizing attacker}, on the other hand, chooses an attack path (for every critical asset $v_m$) randomly each round, i.e., with probability following a uniform distribution over the possible attack paths in $\mathcal{P}_m$. 
Such attackers have also been studied in other work using attack graph models~\cite{wang2019attacking}. 
We consider a third attacker type, the \textit{adaptive attacker}, who chooses the least chosen attack path in the past $N$ moves (for every critical asset).

In contrast to replay attacker and randomizing attacker, we assume that the adaptive attacker is aware that the defender's strategy considers the most recent $N$ attacks, and thus the attacker engineers its attack history over a period of time so as to make additional gains on the future attack by choosing the least chosen attack path in the past $N$ moves. Note that the attacker does not have a budget, he just chooses an attack path to each critical asset.

\SetKwInput{KwInput}{Input} 
\SetKwInput{KwOutput}{Output}
\begin{algorithm}[t]
  \KwInput{Set of attack paths $\mathcal{P}_m$, number of rounds $N_{R}$ and history of attack paths $(P^{t-N},\cdots,P^{t-1})$}
  \KwOutput{Vector of investments over rounds, $\mathcal{O}$}
  Round Number = t = 0
  
  \While{$t < N_R$}
    {  \For {$v_m \in V_k$}
        {
            \For{Path $P \in \mathcal{P}_m$}
            {
              $ \beta^t_P = \frac{1}{N} \sum_{\tau = t - N}^{t-1} [P^{\tau} = P]_1 $ 
              
            }
        }
        $ C^t_{k}(x_k) = \displaystyle \sum_{v_{m} \in V_{k}} L_{m} \hspace{0.3mm} \Big( \hspace{0.3mm} \sum_{P \in P_{m}} \beta^t_P  \prod_{(v_{i},v_{j}) \in P} w(p_{i,j}(x_{i,j})) \hspace{0.3mm} \Big) $
        
        $x^{t}_{k} \in  {\displaystyle \textit{argmin}_{x_k \in X_{k}}} \hspace{0.5mm} C^t_{k}(x_k)$
        
        Append ($\mathcal{O}$, $x^{t}_k$) 
        
        }   
            
            Return $\mathcal{O}$
\caption{Learning Attack Paths}
\label{alg:attacker}
\end{algorithm} 

\subsection{Reinforcement Learning for Reducing Behavioral Decision-Making}\label{sec:improve-alpha}
As shown in Sections~\ref{sec:human-experiments} and~\ref{sec:theory}, the one-round investment decisions made by a behavioral defender $D_k$ based on the decision model in Equation \eqref{eq:defender_utility_edge} are sub-optimal. 
It is therefore of interest to understand whether such defender can reduce her behavioral biases in a multi-round defense game by using her experience from previous rounds. In this section, we propose a learning technique through which the defender can make such progress towards a more rational model, i.e., leads to $\alpha^j > \alpha^i$, for some $j > i$,  where $\alpha^i$ denotes the behavioral level in round $i$. 
Our proposed algorithm, outlined in Algorithm~\ref{alg:rl}, uses a reinforcement learning approach. 
Our algorithm is based on that of \cite{feltovich2000reinforcement}, adapted to our problem of security investment decision-making.

The algorithm proceeds as follows. Let
$q^t(\alpha_i)$ denotes the defender's propensity to invest according to the behavioral level $\alpha_i$ at round $t$.   We first initialize these propensities to the defender's initial behavioral level (i.e., $\alpha^0 = \alpha_i$, $q^0(\alpha_i) = A$, and $q^0(\alpha_j) = B, \forall j \neq i$).\footnote{In our evaluation, we show the convergence of Algorithm~\ref{alg:rl} under different possible values of the initial propensities of different behavioral levels (i.e., $A$ and $B$ in Algorithm~\ref{alg:rl}). We also show in Appendix~\ref{app:RL_convergence_analysis} that the convergence of Algorithm~\ref{alg:rl} depends on the true total loss of the investment, not the initial propensities. 
} Then, for every round $t$, the defender does not know her behavioral level but she draws her defense budget decision in accordance to her reinforcement level. After the defender distributes her defense budget, she receives corresponding reinforcement $R^t$ (which is the difference between the true loss $\hat{C}^t(x^t_k)$ calculated with the investments (budget allocation on edges) in round $t$, denoted by $x^t_k$, and the maximum possible true loss $\hat{C}_{max}$ (which is the initial loss). Thus, if the defender invests according to a more rational behavior (i.e., higher $\alpha$) in round $t$, she receives higher reinforcement and thus the propensity to choose this investment again in next rounds ($q^{t+1}(\alpha_i)$) increases. For all other investments that are not observed in this round, the propensities of the corresponding behavioral levels do not change. Then, we update the probability distribution for the investments (resp. behavioral levels) for the next round. We repeat the process until we reach convergence (where the reinforcement learning model chooses  $\alpha_i = 1$ with a probability sufficiently close to 1) or we reach the maximum number of rounds $N_R$. The output of our algorithm is a time-series of behavioral level values. We emphasize that the learning comes from the reinforcements received each round which controls the propensity of  the defender to choose particular budget distributions in next rounds and that the defender does not know the optimal investments apriori.

\textbf{Convergence of Algorithm~\ref{alg:rl} to rational behaviour:} 
The convergence of Algorithm~\ref{alg:rl} depends on the relation between the total loss (true cost) under rational behavior $\alpha = 1$ and the total loss (true cost) under bias $\alpha < 1$. 
In the interest of space, we state this result in Lemma~\ref{lemma: RL_convergence} and provide its proof in Appendix~\ref{app:RL_convergence_analysis}.

\SetKwInput{KwInput}{Input}
\SetKwInput{KwOutput}{Output}
\begin{algorithm}
  \KwInput{Set of behavioral levels $\mathcal{\alpha}$ and number of rounds $N_{R}$}
  \KwOutput{Vector of behavioral level over rounds $\mathcal{O}$}
  Round Number = t = 0
 
            $q^{0}(\alpha_i) =  A$ and $q^{0}(\alpha_j) =  B  \forall j \neq i$ 
 
  \While{$t < N_R$ or \text{not Convergence to $\alpha_i = 1$}}
      {  \For {$\alpha_i \in {\mathcal{\alpha}}$}
        {
        \If  {$\alpha_i$ \text{was observed in round t}}
           {
            $x^{t}_{k} \in  {\displaystyle \textit{argmin}_{x_k \in X_{k}}} \hspace{0.5mm} C^t_{k}(x_k,\alpha_i)$
            
            
            $R^t = \hat{C}_{max} - \hat{C}^t_{k}(x^{t}_{k})$

            $q^{t+1}(\alpha_i)$ = $q^{t}(\alpha_i) + R^t$  
            }
        \Else
            { 
              $q^{t+1}(\alpha_i)$ = $q^{t}(\alpha_i)$ 
            }
        
    
              $p^{t+1}(\alpha_i)$ = $\frac{q^{t+1}(\alpha_i)}{\sum_{\alpha_i \in \mathcal{\alpha}} q^{t+1}(\alpha_i)}$
        }
        
         Sample random $\alpha_i$
            with probability  $p^{t+1}(\alpha_i)$ to get $\alpha^{t+1}$
            

        Append ($\mathcal{O}$, $\alpha^{t+1}$) 
        
        }   
            
            Return $\mathcal{O}$
\caption{Reinforcement Learning to Reduce Behavioral Biases}
\label{alg:rl}
\end{algorithm} 

\subsection{Hybrid-Learning Algorithm}\label{sec:hybrid-learning}
In Algorithm~\ref{alg:rl} the defender learns through observing her payoffs in the last recent rounds. In Algorithm~\ref{alg:attacker}, the defender learns the attacker's chosen paths. Here, we combine these two forms of learning to create a hybrid learning algorithm. This algorithm is a modified version of Algorithm~\ref{alg:rl} where the cost $C^t_k(x_k, \alpha_i)$ is the cost proposed in Algorithm~\ref{alg:attacker}, which changes each round as the defender updates the weights of each path according to the history of attack paths. We will evaluate this hybrid-learning algorithm in Section~\ref{sec:evaluation} and will compare it with both of Algorithm~\ref{alg:attacker} and Algorithm~\ref{alg:rl}, described earlier in this section. 
\section{Evaluation}\label{sec:evaluation}

\begin{figure*}[t] 
\begin{subfigure}[t]{.33\textwidth}
 \centering
    \includegraphics[width=\linewidth]{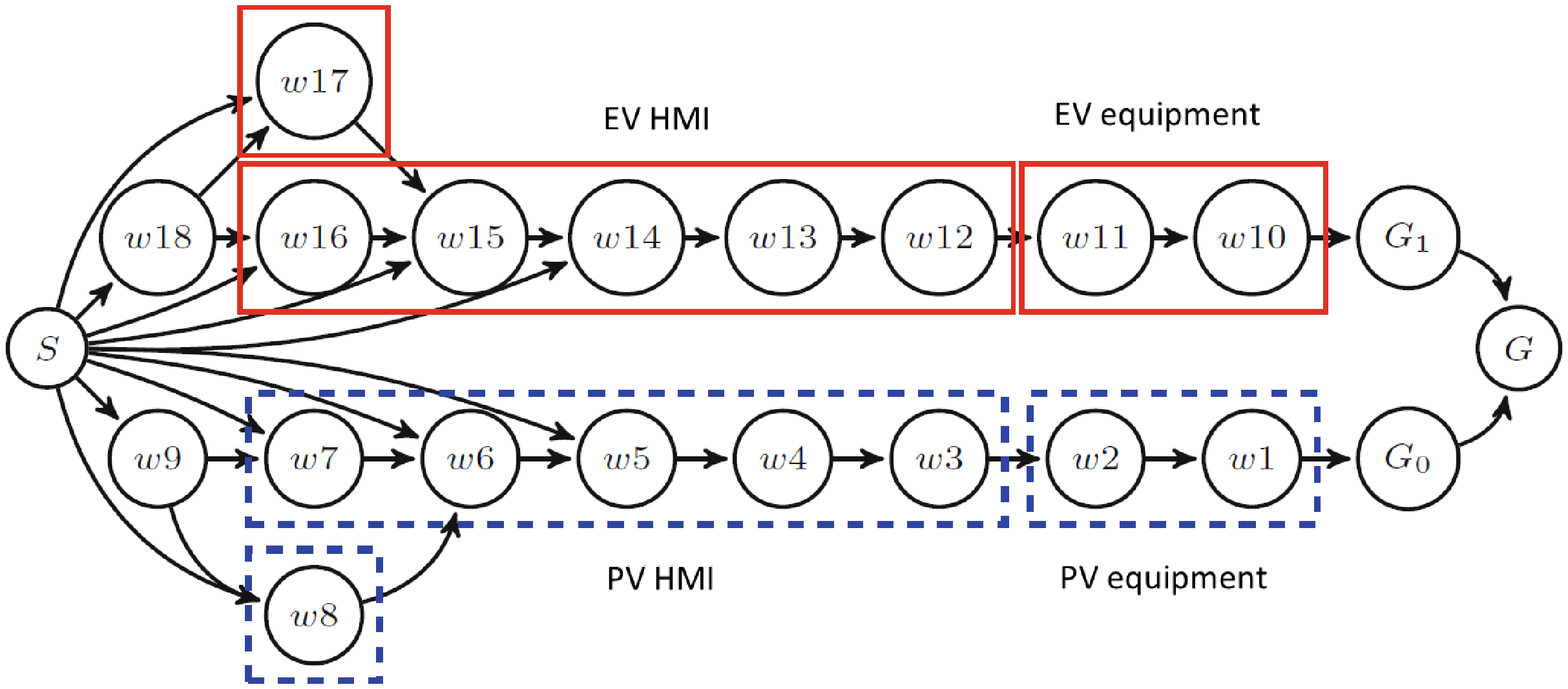}
\caption{Attack Graph of DER System}
\label{fig:DER_Attack_Graph} 
\end{subfigure} 
 \begin{subfigure}[t]{.33\textwidth}
\centering
  \includegraphics[width=\linewidth,height=22mm]{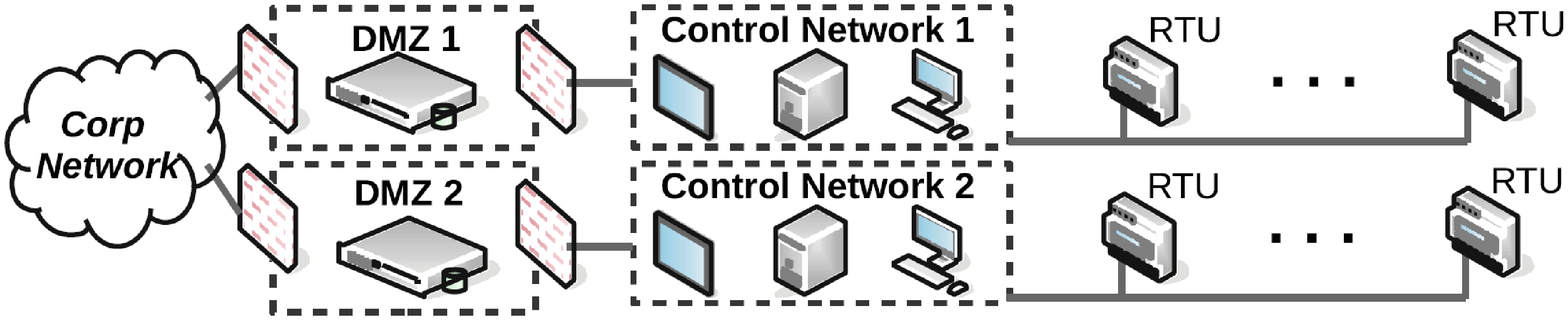}
  \caption{A high level network overview of the SCADA system}
 \label{fig:Scada_High_level_overview}
\end{subfigure}
 \begin{subfigure}[t]{.33\textwidth}
\centering
\includegraphics[width=\linewidth]{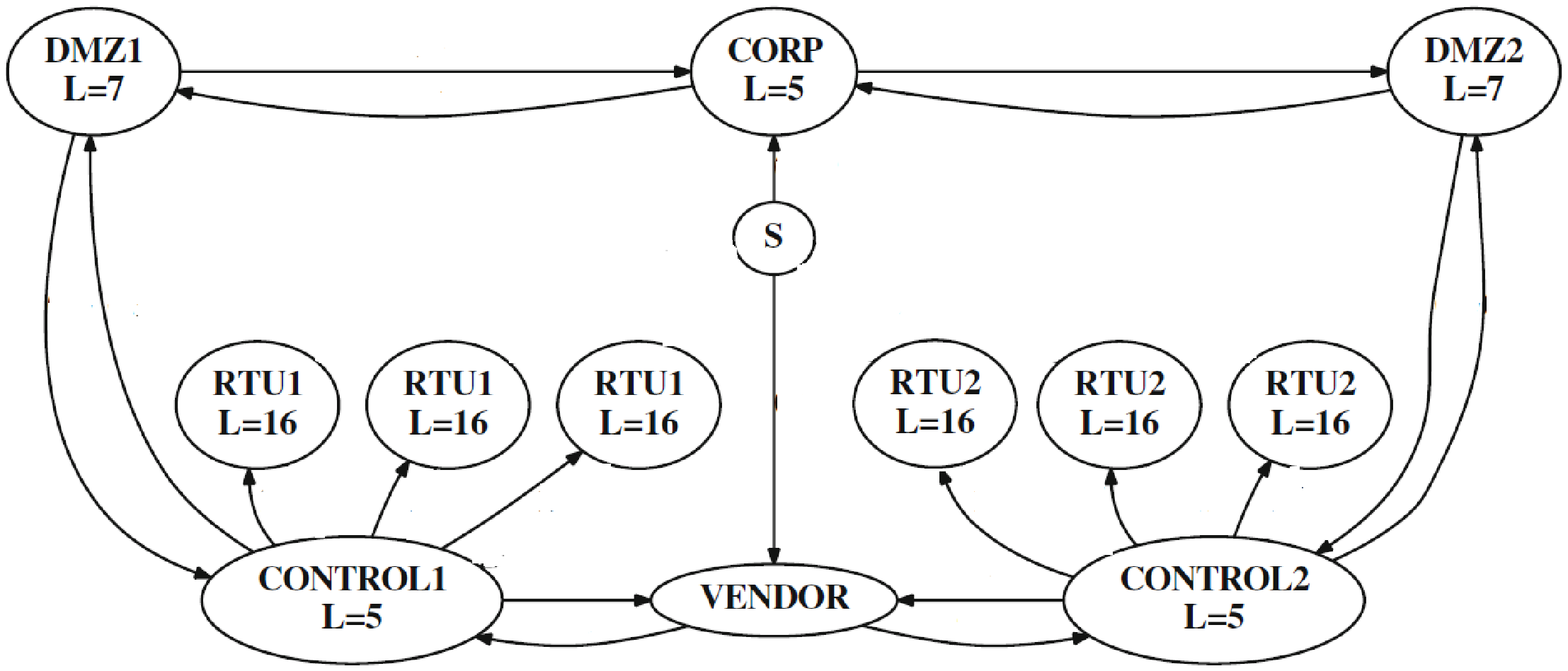}
\caption{Attack Graph of SCADA System}
\label{fig:SCADA_Attack_Graph}
\end{subfigure}
 \caption{Attack graphs of DER.1 and SCADA case studies. The attack graphs of the remaining systems are given in Appendix~\ref{app:systems_overview}.}
\label{fig:Attack_Graphs}
 \end{figure*}

Our evaluation of \name aims to answer the following questions:
\begin{itemize}
\item What is the gain of using \name for guiding behavioral decision-makers towards rational decision-making?
\item How can we decrease level of behavioral bias over rounds?
\item How does each system parameter affect the overall security level of the system with behavioral decision-making?
\end{itemize}

\subsection{Experimental Setup}
\textbf{Dataset Description:} We use five synthesized attack graphs that
represent real-world interdependent systems with different sizes to evaluate our setups, i.e., different attacks, defense, and learning (See Table~\ref{tbl:gain_morshed}). Specifically, we consider 5 popular interdependent systems from the literature which are: DER.1~\cite{jauhar2015model}, SCADA (with internal attacks)~\cite{hota2016optimal}, SCADA (with only external attacks), IEEE 300-bus smart grid~\cite{khanabadi2012optimal}, E-commerce~\cite{modelo2008determining}, and VOIP~\cite{modelo2008determining}. In all of these systems, nodes represent attack steps (e.g., taking privilege of control unit software in SCADA,  accessing customer confidential data such as credit card information in E-commerce). Now, we give a detailed explanation of one of these systems; the SCADA system (see Appendix~\ref{app:systems_overview} and \cite{jauhar2015model, modelo2008determining, khanabadi2012optimal} for detailed description of the rest of the systems). We generate the attack graphs using the CyberSage tool~\cite{jauhar2015model} which maps the failure scenarios of the system automatically into an attack graph given the workflow of that system, the security goals, and the attacker model.

\begin{figure*}[t] 
\centering
\begin{subfigure}[t]{.32\textwidth}
\centering
 \includegraphics[width=\linewidth]{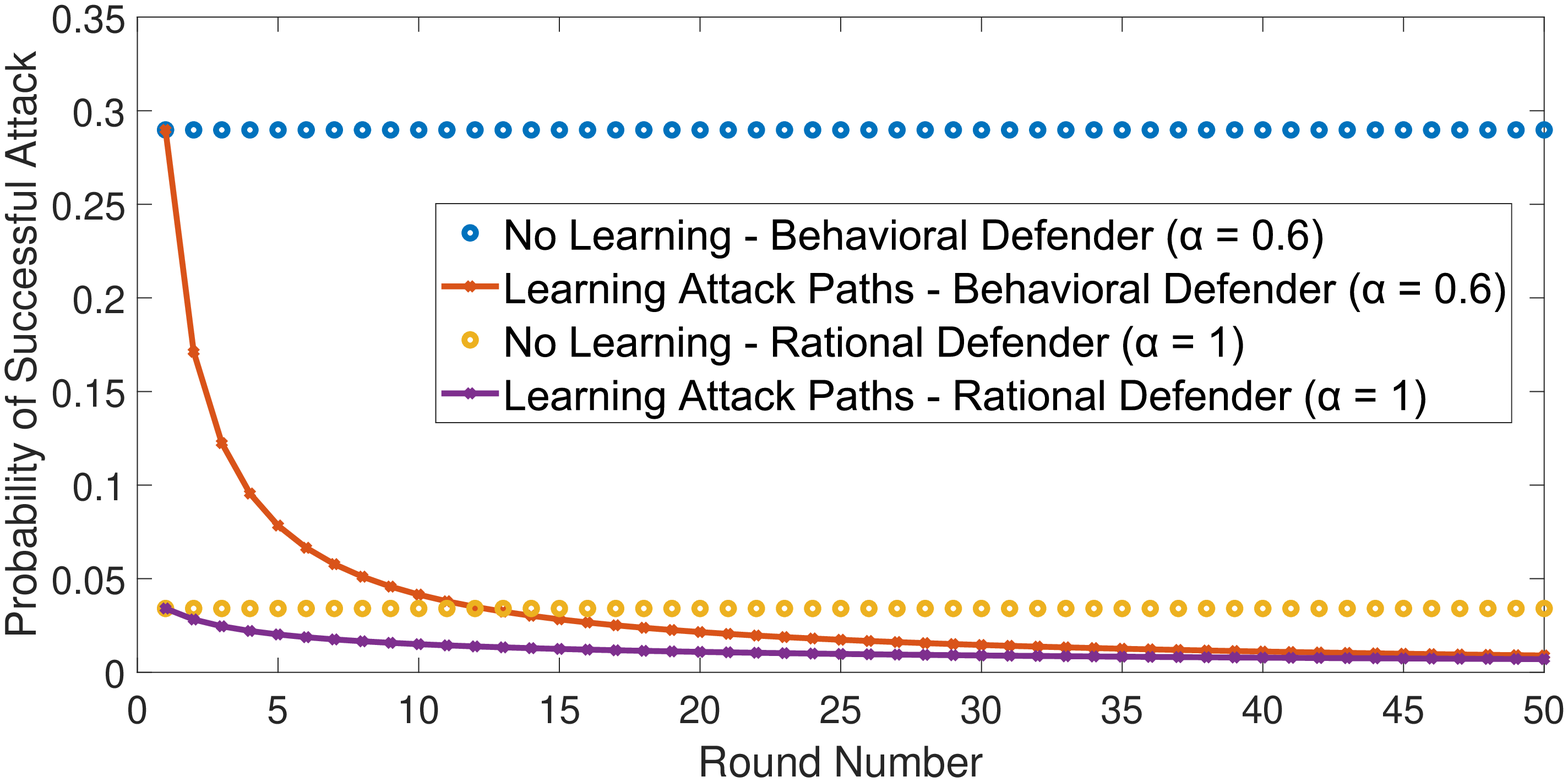}
  \caption{Attacker chooses same attack paths}
  \label{fig:learning_attack_paths_scada}
 \end{subfigure} 
 \begin{subfigure}[t]{.32\textwidth}
\centering
  \includegraphics[width=\linewidth]{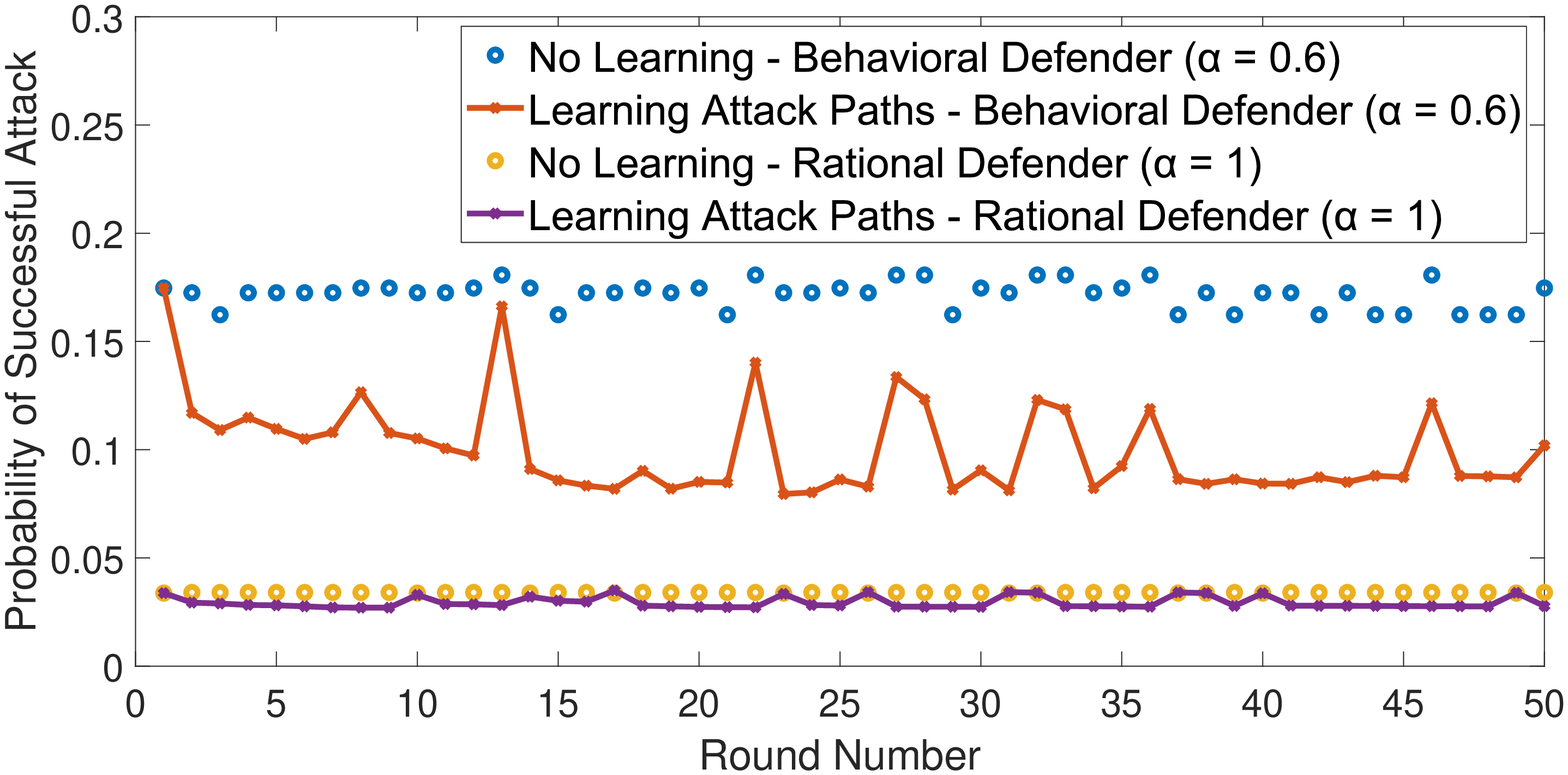}
  \caption{Attacker chooses attack paths randomly}
  \label{fig:learning_attack_paths_scada_random}
 \end{subfigure}
  \begin{subfigure}[t]{.32\textwidth}
\centering
  \includegraphics[width=0.95\linewidth]{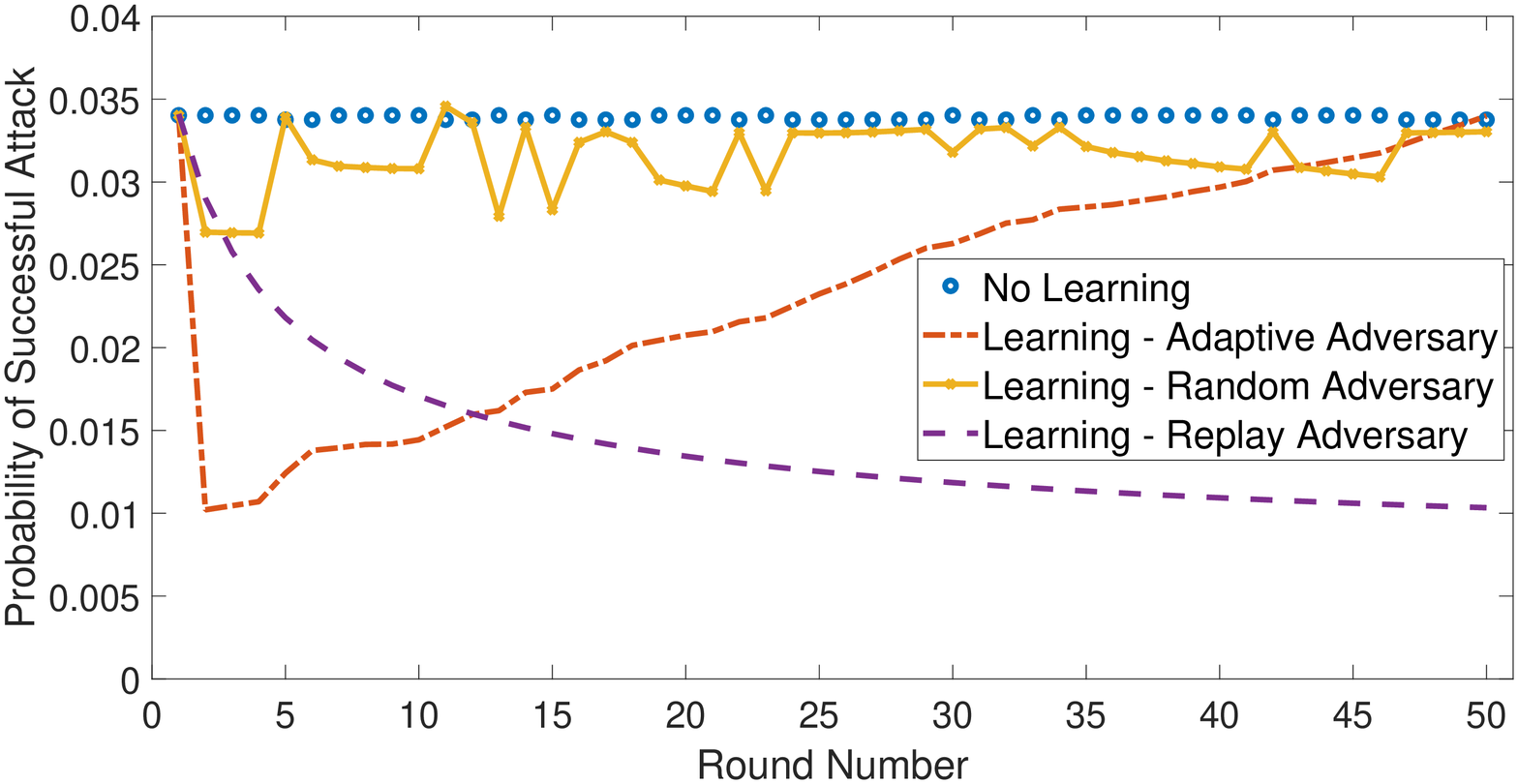}
  \caption{Different attack types comparison}
  \label{fig:learning_attack_paths_scada_all}
 \end{subfigure}
 \caption{The effect of learning attack paths over the rounds. The learning is useful for both behavioral and rational defenders. Moreover, behavioral defender with learning attack paths can eventually reach same security level as rational defender (specifically if the attacker chooses same attack path for each critical asset over rounds). The adaptive attacker is the most challenging attack type.}
\label{fig:learning_rounds}
 \end{figure*}
\begin{table}[t]
\caption {The one-round gain of \name compared to behavioral investment decisions for the five studied interdependent systems.}
\label{tbl:gain_morshed}
\centering
\resizebox{0.9\columnwidth}{!}
{%
\begin{tabular}{|l|l|l|l|l|l|l|}
\hline
\multicolumn{1}{|l|}{\text{\bf System}}
& \multicolumn{1}{l|}{\bf $\#$ Nodes}
& \multicolumn{1}{l|}{\bf $\#$ Edges}
& \multicolumn{1}{l|}{\bf \shortstack{ $\#$ Min-cut Edges}}
& \multicolumn{1}{l|}{\bf \shortstack{ $\#$ Critical Assets}}
& \multicolumn{1}{l|}{\bf Avg Gain}
& \multicolumn{1}{l|}{\bf Max Gain}
\\
\cline{1-5}
\hline
SCADA-external & 13 & 20 & 2 & 6 & 1.43 & 2.63\\
\hline
SCADA-internal \cite{hota2016optimal} & 13 & 26 & 8 & 6 & 4.43 & 9.42 \\
\hline
DER.1 \cite{jauhar2015model} & 22 & 32 & 2 & 2 & 1.29 & 2.38 \\  
\hline
E-Commerce \cite{modelo2008determining} & 18  & 26 & 1 & 4 & 3.70 & 18.28 \\
\hline
VOIP \cite{modelo2008determining} & 20 & 28  & 2 & 4 & 4.46 & 18.66\\ 
\hline
IEEE 300-bus \cite{khanabadi2012optimal} & 300  &  822  & 98 & 69 & 5.85 & 11.25\\
\hline
\end{tabular}%
}%
\vspace{-0.1in}
\end{table}

\textbf{SCADA system description}:\label{app: scada-system-description}
The SCADA system (shown in Figure~\ref{fig:Scada_High_level_overview}) is composed of two control subsystems, where each incorporates a number of cyber components, such as control subnetworks and remote terminal units (RTUs), and physical components, such as, valves controlled by the RTUs. 
This system is architected following the NIST guidelines for industrial control systems. For example, each subsystem is separated from external networks through a demilitarized zone (DMZ). The purpose of a DMZ is to add an additional layer of security between the local area networks of each control subsystem and the external/corporate networks, from where external attackers may attempt to compromise the system. 
The system implements firewalls both between the DMZ and the external networks, as well as between the DMZ and its control subnetwork. Therefore, an adversary must bypass two different levels of security to gain access to the control subnetworks.

Mapping this system to our proposed security game model, each control subnetwork is owned by a different defender. These two subsystems are interdependent via the shared corporate network, as well as due to having a common vendor for their control equipment. The resulting interdependencies map to the attack graph shown in Figure~\ref{fig:SCADA_Attack_Graph}. 
The ``Corp'' and the ``Vendor'' nodes connect the two subnetworks belonging to the two different defenders and can be used as jump points to spread an attack from one control subsystem to the other. This system has six critical assets (i.e., 3 RTUs, Control Unit, CORP, and DMZ). The compromise of a control network ``CONTROL $i$'' will lead to loss of control of all 3 connected RTUs.
Now, we present the various system parameters.

\noindent \textbf{Baseline Probability of successful attack:} Each edge in the attack graphs represents a real vulnerability. To create the baseline probability of attack on each edge (i.e., without any security investment), we first create a table of CVE-IDs (based on real vulnerabilities reported in the CVE database for 2000-2019). We then followed \cite{homer2013aggregating} to convert the attack's metrics (i.e., attack vector (AV), attack complexity (AC)) to a baseline probability of successful attack (e.g., Table \ref{tbl:cvss_cve_der_scada} in Appendix illustrates such process for SCADA and DER.1). Interestingly, we show that the gain of rational vs. behavioral investments exists for any combination of baseline probabilities (as will be shown in Section~\ref{sec: eval_multiple_def}).


\noindent \textbf{Security Budget:} We assume that the total budget available at the defenders' organization is $B$, and that an amount $BT$ of this budget is set aside for security investments. We refer to $BT < 0.3B$, $0.3B < BT < 0.6B$, and $BT > 0.6B$, as low, medium, and high security budgets. For instance, $ BT = \$10$ and $\$20 $ reflect low and moderate budgets, respectively  and $ BT \geq \$30$ reflects high budgets in SCADA system given that $B = \$50$.
We emphasize that the gain of our proposed techniques exists for any choice of budget (as will be shown in Section~\ref{sec: eval_multiple_def}).

\noindent \textbf{Convergence to Optimal Solution:} In our experiments, to find the optimal investments, we use the notion of {\it best response dynamics}, where the investments of each defender $D_k$ are iteratively updated based on the investments of the other defenders. In each iteration, the optimal investments for defender $D_k$ can be calculated by solving the convex optimization problem in \eqref{eq:defender_utility_edge}.\footnote{
Note that in the results of learning attack paths and Hybrid learning techniques, we use different cost function (shown in Algorithm~\ref{alg:attacker}).} Note that the best response dynamics converge to a Nash equilibrium~\cite{hota2016optimal} and we study the security outcomes at that equilibrium.


\subsection{Gain from Using \name in One Round}
Here, we show the gain that behavioral security decision-maker would have using \name.

\noindent \textbf{Reduction in Defender's Total Loss:} To show the gain of our proposed algorithm, we quantitatively compare the total system loss of the aforementioned five systems in two scenarios which are assuming behavioral decision-maker without the help of \name and with the help of \name investments, respectively. We then calculate the gain as the ratio of the total system loss by behavioral decision-maker to the total system loss by \name to quantify the benefit of using our proposed algorithm.

\textbf{1) Average Gain:} We define the Average Gain as the ratio of the weighted sum of total system loss by behavioral decision-maker to the total system loss by \name assuming that 50\% of the decision-makers are fully rational (with $\alpha = 1$) and 50\% are behavioral defenders ($\alpha \in [0.4,1)$); this is consistent with the range of behavioral parameters from prior experimental studies \cite{gonzalez1999shape} and our subject study. Average Gain for all systems is shown in Table~\ref{tbl:gain_morshed}. 

\textbf{2) Maximum Gain:} We define the Maximum Gain as the ratio of the total system loss by the highest behavioral defender ($\alpha = 0.4$) to the total system loss by rational ($\alpha = 1$) decision-maker (computed by \name). Table~\ref{tbl:gain_morshed} shows maximum gains which are 2.38, 9.43, 2.63, 11.25, 18.28, and 18.66 for the  DER.1, SCADA-internal, SCADA-external, IEEE 300-bus, E-commerce, and VOIP respectively.  

\begin{figure*}[t] 
\centering
\begin{subfigure}[t]{.32\textwidth}
 \centering
\includegraphics[width=0.89\linewidth]{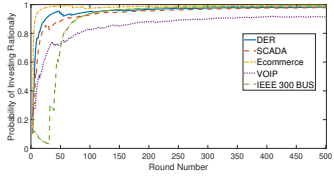}
  \caption{Convergence of Reinforcement learning to rational behavior for the five studied interdependent systems.} 
  \label{fig:RL_all_systems}
\end{subfigure}\hfill
\begin{subfigure}[t]{.32\textwidth}
 \centering
   \includegraphics[width=0.89\linewidth]{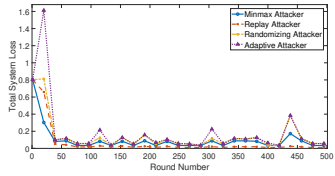}
  \caption{Defense Enhancement under Hybrid Learning for different attack types. The spikes (that represents investing suboptimally) decreases in later rounds.} 
  \label{fig:Hybrid_learning_costs}
 \end{subfigure}\hfill 
 \begin{subfigure}[t]{.32\textwidth}
 \centering
   \includegraphics[width=0.9\linewidth]{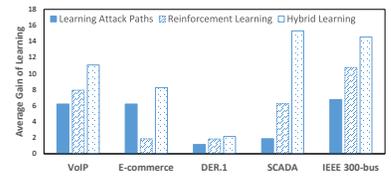}
  \caption{Average Gain in Total System Loss for the different Learning techniques. The Hybrid Learning is superior for all five systems.} 
  \label{fig:comparison_different_learning_techniques}
 \end{subfigure}
 \caption{(a) shows the convergence of Reinforcement learning for all systems. (b) shows the effect of Hybrid learning for each attack type. In (c), we show the average gain of learning for all systems.}
\label{fig:learning_combined_systems_effects}
 \end{figure*}

\vspace{-3mm}
\subsection{Learning over Rounds Results}
Now, we consider the different setups where the defender learn over rounds using our proposed algorithms in Section~\ref{sec:learning_rounds}. For some results, we only show results on the SCADA attack graph as we observe similar patterns on the remaining studied attack graphs. 

\textbf{1) Learning of attack paths:}
We show the effect of learning attack paths over the rounds for all of the possible attack scenarios described in Section~\ref{sec:learning_rounds}. We consider the five systems described earlier and simulate our learning algorithm over 50 rounds with considering medium budget. For each round, the attacker chooses one path for compromising each critical asset (for SCADA, we have six critical assets (i.e., 3 RTUs, Control Unit, CORP, and DMZ) and thus each round the attacker chooses six paths, one for each critical asset) and then the overall probability of successful attack is calculated. We show that the learning of attack paths is useful for both behavioral and rational defenders. Specifically, Figure~\ref{fig:learning_attack_paths_scada} shows such effect of learning if the attacker chooses same attack paths for each critical asset. Also, Figure~\ref{fig:learning_attack_paths_scada_random} shows that our proposed algorithm helps enhancing system security even if the attacker chooses attack paths randomly over rounds since it captures an approximate distribution of the attacker choice of the paths over the rounds.  Interestingly, behavioral defender that learns attack paths can eventually reach comparable security level as rational defender (with same security level if the attacker chooses same attack path for each critical asset over rounds; here, after 40 rounds as shown in Figure~\ref{fig:learning_attack_paths_scada}). Moreover, we compare the learning effect for all attack types, defined in Section~\ref{sec:learning_rounds}, in Figure~\ref{fig:learning_attack_paths_scada_all} which shows that adaptive attacker is the most challenging attack type.

\textbf{2) Reinforcement learning of behavioral level:} Now, we show the performance of our reinforcement learning algorithm to guide behavioral decision-makers to rational behavior. Here, we consider the attacker who chooses the most vulnerable path to each target asset introduced in Section~\ref{sec:model}. For each system of the five systems, we run our learning algorithm over 500 rounds with considering medium budget. For each round, the attacker chooses the most vulnerable path for compromising each critical asset. First, Figure~\ref{fig:RL_all_systems} shows the convergence of our algorithm over the rounds to rational behavior (i.e., $\alpha$ = 1) for all of the five systems where the probability of having rational behavior after learning over 100 rounds is more than 0.9 and approaches 1 by the end of 500 rounds (for four systems from the five systems). Note that here we show the convergence when the initial behavior was $\alpha = 0.2$ (i.e., $\alpha^{0} = 0.2$). Such convergence would happen for any behavioral defender (with any $\alpha < 1$) given enough learning. This also shows that behavioral defender with our proposed Reinforcement learning algorithm can eventually reach optimal investment decisions (that leads to comparable security level as rational defender). 

Figure~\ref{fig:RL_all_systems} shows the rate of convergence of our Reinforcement-learning algorithm for the five case studies. It worth noting that learning is slower for VOIP compared to the other four systems. The reason is the higher criss-cross edges across the VOIP system (see Figure~\ref{fig:voip_High_level_overview} in Appendix~\ref{app:systems_overview}). This also sheds the light that each system has its own characteristics and may need further parameter tuning for enhancing convergence.

\textbf{Initial values of propensities:} Recall from Algorithm~\ref{alg:rl} that $A$ and $B$ represent the initial propensities for investing with the initial behavioral level and the propensities of other possible behavioral levels, respectively. To test convergence under different setups, we iterate $A$ over the values $\{0.1, 0.2,0.6,1,1.6,2\}$ while keeping $B = 0.1$ to simulate different propensities for investing with initial behavior level $\alpha_0$. In all of the experiments, the algorithm converges to rational behaviour with an average of $400$ iterations (i.e., $P^{t}(1) > 0.95$ at $t \geq 400$).  

\textbf{3) Hybrid-learning Results:}
Here, we show the performance of our proposed Hybrid-learning Algorithm in Section~\ref{sec:learning_rounds}. Figure~\ref{fig:Hybrid_learning_costs} shows the enhancement of defense (represented by total system loss) over rounds under the Hybrid-learning for all proposed attack types. Note that we let the initial attack to be the same for all of the four attack types. We note that our proposed Hybrid-learning algorithm is effective in reducing total system loss for all attack types with emphasizing that it is more effective with the replay attacker (i.e., the attacker that chooses same attack path for each critical asset every round) and the minmax attacker (i.e., the attacker that chooses the attack path with the highest probability for each critical asset every round). The enhancement is also noticeable for the two other attack types (i.e., randomizing and adaptive) but with less magnitude since capturing the attack patterns by the defender is more challenging in these two attack types. Note also that the spikes in the figure corresponds to the rounds in which the defender invest sub-optimally. These spikes decrease with rounds since the probability of investing behaviorally decreases as defender learns and enhance her budget distribution over rounds. 

\noindent \textbf{Benefit of Hybrid Learning:} We show the benefit of the learning techniques by calculating the Average Gain of Learning (which is the ratio of total system loss after learning to the total system loss with no learning averaged over the four attack types we study in this paper). Figure~\ref{fig:comparison_different_learning_techniques} shows the Average Gain of learning for the three earning techniques: Learning attack paths only (Algorithm~\ref{alg:attacker}), Reinforcement Learning only (Algorithm~\ref{alg:rl}), and Hybrid Learning. We observe the superiority of Hybrid Learning compared to using only one of the two learning techniques for all of the five systems. The intuition is that this Hybrid learning combines both learning behavioral level with learning attack paths.
\subsection{Baseline Systems} We compare \name with two baseline systems: the seminal work of \cite{sheyner2002automated} for security investment with attack graphs on attack graph generation and investment decision analysis\footnote{More recent approaches (e.g., \cite{zhang2016network}) follow the same strategy proposed in \cite{sheyner2002automated}.} and \cite{lippmann2006validating} for placing security resources using defense in depth technique which traverses all edges  that can be used to compromise each critical asset and distribute resources equally on them. 
In \cite{sheyner2002automated}, the defense mechanism is to select the minimal set $C$ of edges that, if removed from the attacker's arsenal, will prevent her from reaching the target asset (there can be multiple sets in case of non-uniqueness). This is equivalent to our min edge-cut. We compare~\cite{sheyner2002automated} and ~\cite{lippmann2006validating} with \name under both single and  multi-round setups. We compare the two methods in Table~\ref{tbl:Baseline_Comp_Table} by calculating the probability of successful attack (PSA) and show the superiority of \name in multi-round for all different attack types. Note that the defense investments given by \name for non-behavioral defenders is identical to that determined by \cite{sheyner2002automated} in single-round setup.

\subsection{Evaluation of Multiple-defender Setups}\label{sec: eval_multiple_def}
Here, we evaluate our proposed algorithm in multiple-defender setups. There are six parameters that could affect the total loss of the defender. The six parameters are: defenders' security budget availability (Low, Moderate, and High), the defense mechanism (Individual, and  Joint), the budget distribution among defenders (Symmetric, and Asymmetric), the degree of interdependency (number of edges between defenders' subnetworks), the sensitivity of edges to investments (the hyperparameter $s_{i,j}$), and the edges' baseline probabilities of successful attacks (the hyperparameter $p^{0}_{i,j}$). When studying the impact of a specific parameter, we fix the remaining parameters to their default values. Next, we study the impact of each system parameter with the behavioral decision-making and identify the effects of these system parameters on the degree of suboptimality of security outcomes due to behavioral decision-making in the two-defenders SCADA system.  
\renewcommand{\arraystretch}{0.6}
\begin{table}
\caption {Comparison of \name and baseline systems for different attacks scenarios. We consider here rational defender for \name. The column ``System Setup" shows the specific scenario; the second, third, and forth columns show the respective probability of successful attack (PSA) under \cite{sheyner2002automated},  \cite{lippmann2006validating}, and \name for each scenario.} 
\label{tbl:Baseline_Comp_Table}
\centering
\resizebox{0.75\columnwidth}{!}{%
\begin{tabular}{|l|l|l|l|l|l|l}
\hline
\multicolumn{1}{|l|}{\text{System Setup}}
& \multicolumn{1}{l|}{\cite{sheyner2002automated} }
& \multicolumn{1}{l|}{\cite{lippmann2006validating}}
& \multicolumn{1}{l|}{\name}\\
\hline
\multicolumn{4}{|c|}{\textbf{DER.1}} \\
\hline
\multicolumn{1}{|l|}{ }
& \multicolumn{3}{c|}{PSA}\\
\hline
Single-round   &  \textbf{0.075}  &  0.208  &  \textbf{0.075}  \\
\hline
Multi-round, Random Att.   &  0.095  &  0.205 & \textbf{0.080}  \\
\hline
Multi-round, Replay Att.   & 0.075  &  0.208   & \textbf{0.037}  \\
\hline
Multi-round, Adaptive Att.  & 0.091  & 0.209  &  \textbf{0.080} \\
\hline
\multicolumn{4}{|c|}{\textbf{SCADA}} \\
\hline
Single-round   & \textbf{0.035}  & 0.110 & \textbf{0.035}   \\
\hline
Multi-round, Random Att.   & 0.034  &  0.582 & \textbf{0.029} \\
\hline
Multi-round, Replay Att.   & 0.033  &  0.110 & \textbf{0.010}  \\
\hline
Multi-round, Adaptive Att.  & \textbf{0.035}  &  0.582 & \textbf{0.035}  \\
\hline
\multicolumn{4}{|c|}{\textbf{VOIP}} \\
\hline
Single-round  & \textbf{0.337}  &  0.556 & \textbf {0.337}\\
\hline
Multi-round, Random Att.  & 0.348  &  0.559 & \textbf {0.313} \\
\hline
Multi-round, Replay Att.   & 0.337  &  0.556 & \textbf{0.084}    \\
\hline
Multi-round, Adaptive Att.  & 0.354 & 0.559 & \textbf{0.313}  \\
\hline
\multicolumn{4}{|c|}{\textbf{E-commerce}} \\
\hline
Single-round   & \textbf{0.124}  &  0.276 & \textbf{0.124}  \\
\hline
Multi-round, Random Att.   & 0.139  &  0.572 & \textbf{0.097} \\
\hline
Multi-round, Replay Att.   & 0.124  &  0.276 & \textbf{0.007}    \\
\hline
Multi-round, Adaptive Att.  & 0.139 &  0.569 & \textbf{0.097}   \\
\hline
\multicolumn{4}{|c|}{\textbf{IEEE 300-BUS}} \\
\hline
Single-round   & \textbf{0.431}  & 0.653 &  \textbf{0.431}  \\
\hline
Multi-round, Random Att.   & 0.439 & 0.680 & \textbf{0.168}  \\
\hline
Multi-round, Replay Att.   & 0.431  &  0.653 & \textbf{0.086}  \\
\hline
Multi-round, Adaptive Att.  & 0.448 & 0.680 &  \textbf{0.186}  \\
\hline
\end{tabular}}
\vspace{-3mm}
\end{table}

\begin{figure}[t]
\begin{minipage}[t]{.48\linewidth}
\centering
  \includegraphics[width=\linewidth]{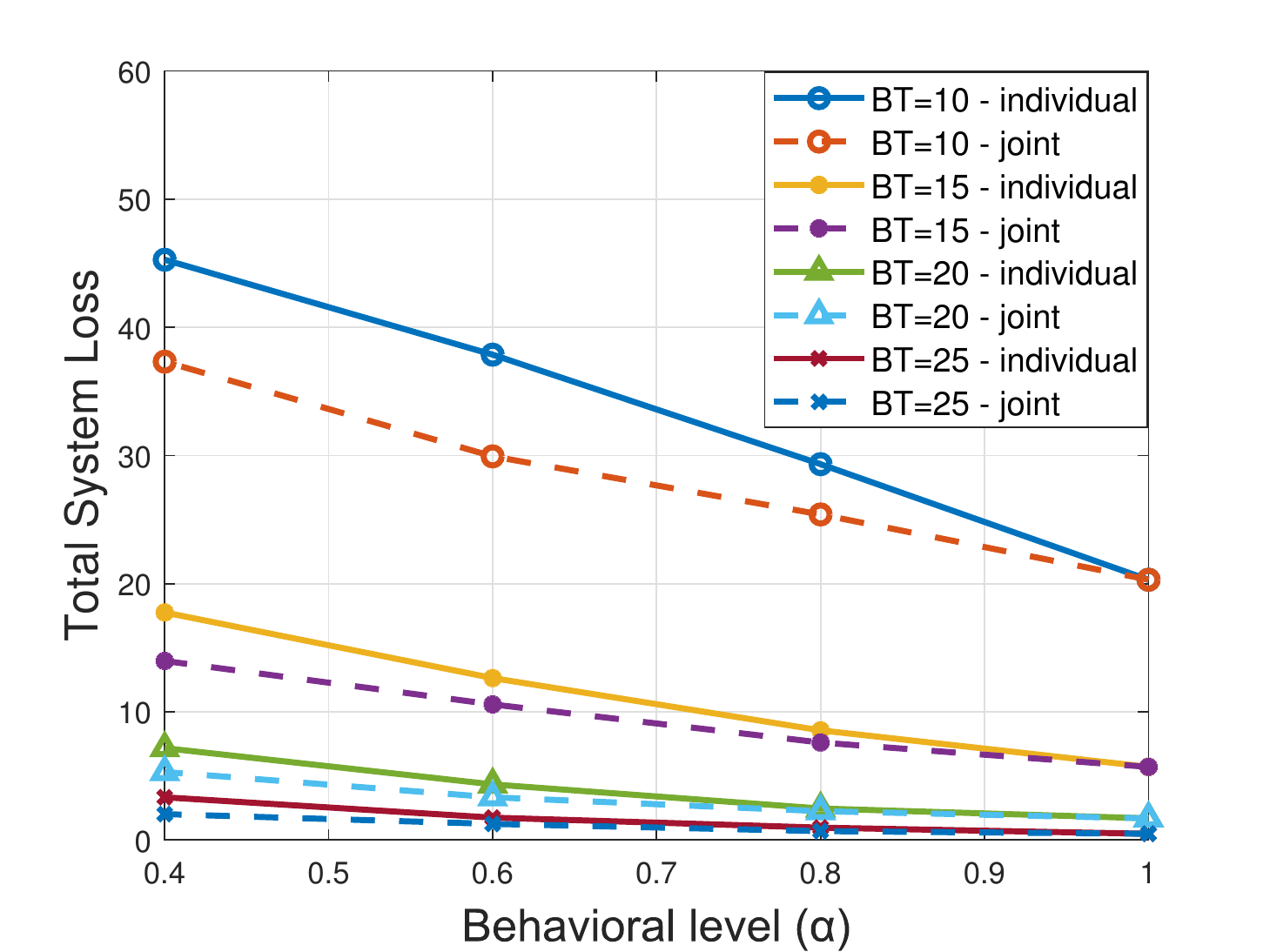}
  \caption{Comparison between individual and joint defense mechanisms. Joint defense is superior under asymmetric budget distribution.} 
\label{fig:Defense_mechanisms}
\end{minipage}\hfill 
\begin{minipage}[t]{0.48\linewidth}
 \centering
 \includegraphics[width=\linewidth]{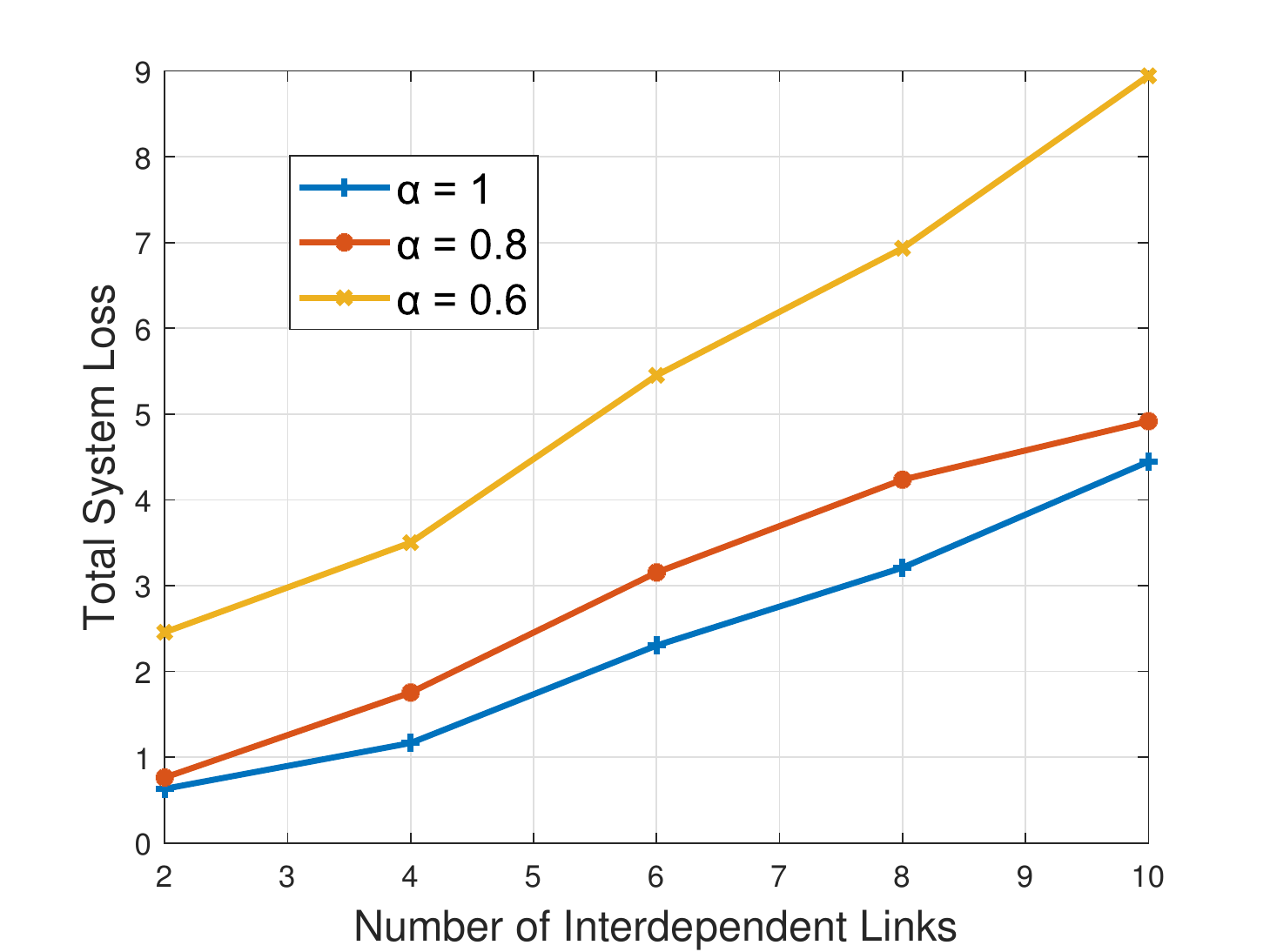}
\caption{The effect of increasing the degree of interdependency on the total system loss. Such effect is more pronounced when the defender is more behavioral.}
  \label{fig:Interdependency_Effect_Scada}
\end{minipage}
  \vspace{-2mm}
\end{figure}

\begin{figure*}[t] 
\centering
\begin{subfigure}[t]{.32\textwidth}
\centering
   \includegraphics[width=\linewidth]{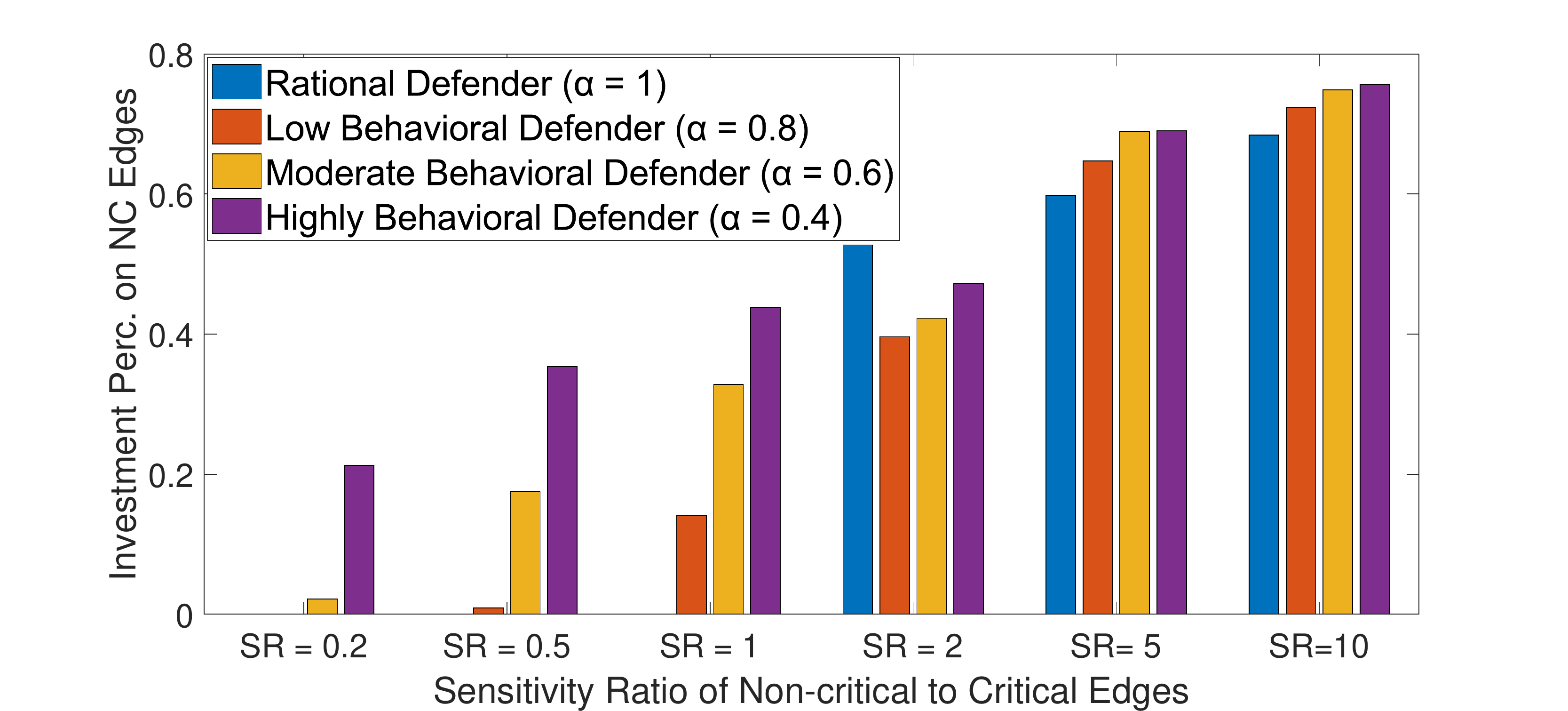}
  \caption{Percentage of investments on non-critical edges for different edge sensitivities ratios (SR).}
  \label{fig:sensitivity-exp-Scada}
 \end{subfigure}\hfill
 \begin{subfigure}[t]{.32\textwidth}
\centering
  \includegraphics[width=\linewidth]{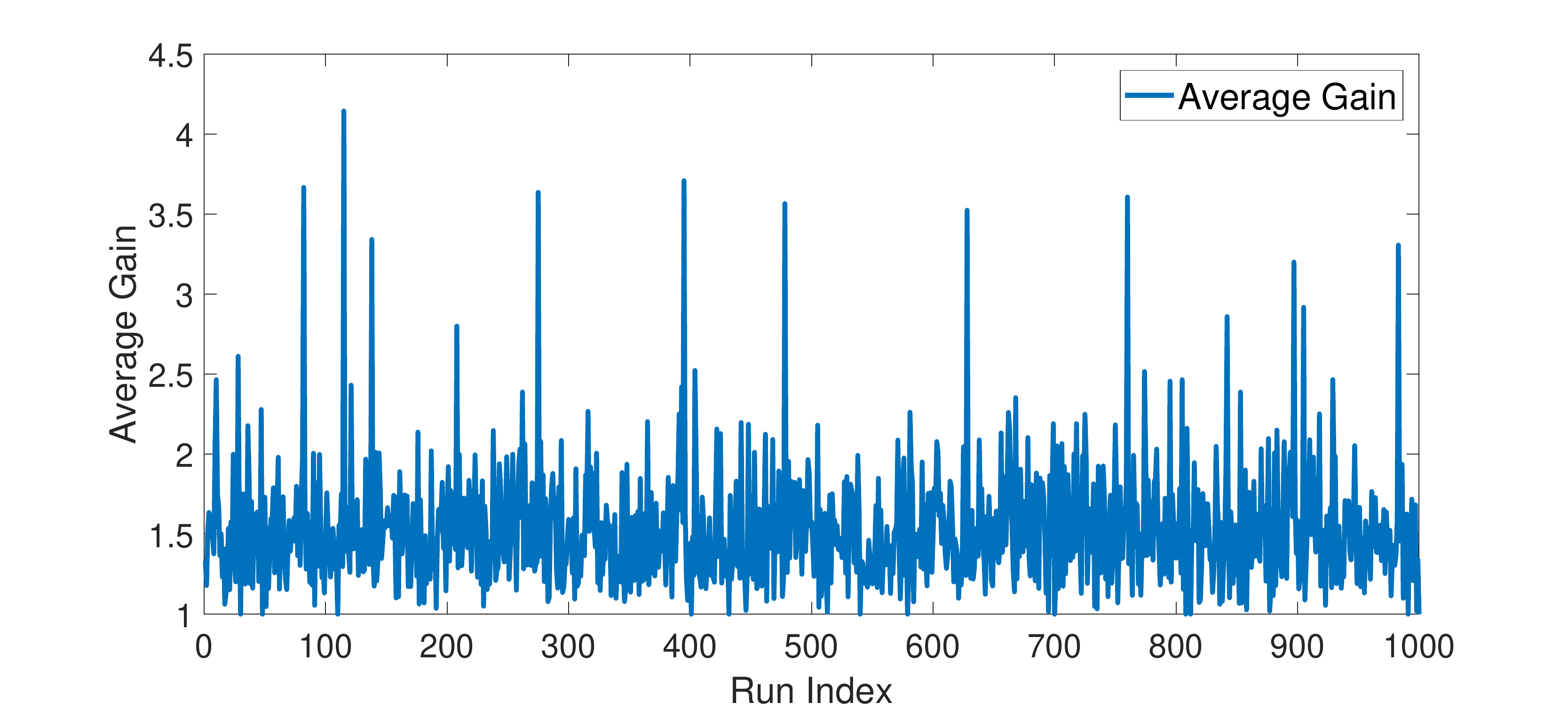}
  \caption{Average gain for different random attack success probabilities over all of the edges.}
  \label{fig:effect_all_probabilites_scad}
 \end{subfigure}\hfill
 \begin{subfigure}[t]{.32\textwidth}
\centering
 \includegraphics[width=\linewidth]{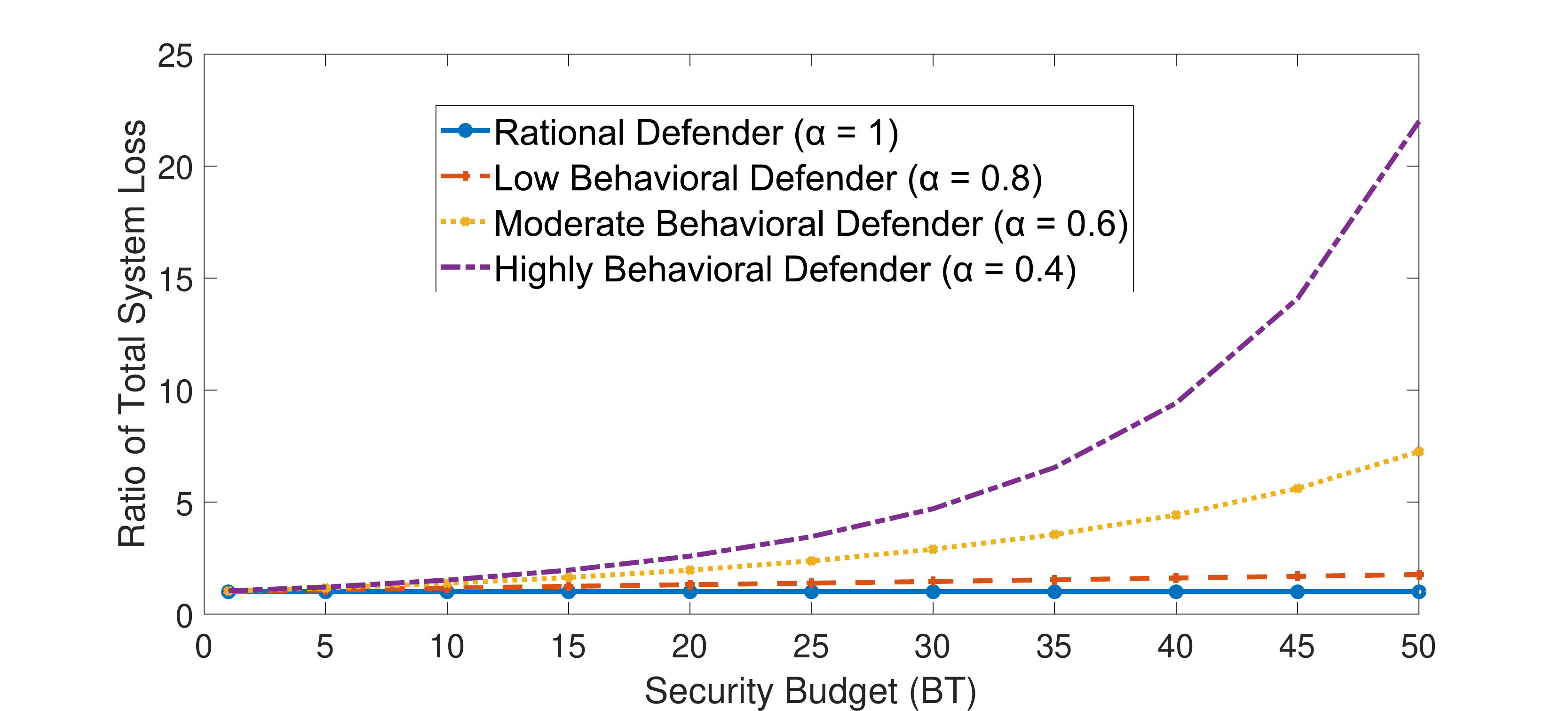}
  \caption{Effect of behavioral investments for different security budgets.}
  \label{fig:effect_all_security_budgets_scada}
 \end{subfigure}
 \caption{(a) The effect of edges' sensitivities on investments for different behavioral levels. (b) The average gain of rational decision-making for randomly chosen baseline probabilities of successful attacks. (c) The effect of sub-optimal investments for different choices of security budget. }
\label{fig:parameters_effects}
\end{figure*}

\textbf{1) Effect of defense mechanism:}
We observe the merits of cooperation (i.e., joint defense) in decreasing the total loss to the defenders as shown in Figure~\ref{fig:Defense_mechanisms}. The effect is more pronounced for a higher degree of behavioral bias of the defenders. For example, at moderate budget ($BT = 20$), the relative decrease in total system loss due to joint defense at $\alpha_1 = \alpha_2 = 0.4$ is 25\% while $\alpha_1 = \alpha_2 = 0.8$, the decrease is lower (10\%). Thus, as the defenders exhibit higher degree of cognitive bias, it is more advantageous to adopt joint defense mechanisms. 

\textbf {2) Interdependency among different defenders.} 
Here, we observe effect of interdependency between defenders on the security of the SCADA system. In the SCADA system, the degree of interdependency increases if assets from one subnetwork can access assets in the other, without going through the Corporate or Vendor nodes. For example, if the attacker gets access to Control unit $1$, this enables her to compromise RTU2 as well, in addition to RTU1. Figure \ref{fig:Interdependency_Effect_Scada} illustrates this effect---as the number of interdependent edges between the two defenders increases, the total system loss increases in both non-behavioral and behavioral security games. The highest level of interdependency is when there are two edges between DMZ1 and DMZ2, between Control1 and Control2, and the controller to the 3 RTUs of the other defender. An example of this phenomenon is that if both defenders are non-behavioral and the level of interdependency is the highest, the total system loss is higher by $ 462\% $ over the case of the lowest level of interdependency ($2$ interdependent links). We also see that as the interdependency between the different defenders increases, the suboptimal security decisions have greater adverse impact on the total system loss.

\textbf{3) Sensitivity of edges to investments:} 
We next consider the effects of different sensitivities of edges to security investments. Recall that higher sensitivity edges are those for which the probability of successful attack decreases faster with each unit of security investment. We show the result in Figure~\ref{fig:sensitivity-exp-Scada} by using as the independent variable the ratio of sensitivity of non-critical to critical edges. First, assume critical edges correspond to mature systems that are already highly secure and difficult to secure further. For our model, this translates to high (resp. low) $s_{i,j}$ for non-critical (resp. critical) edges. We observe that as the sensitivity ratio increases, all defenders put more investments on the non-critical edges, but the increase is slower in behavioral defenders. However, lower sensitivity ratio will result in investing almost all budget on these critical edges, even for behavioral defenders.

\textbf{4) Baseline probabilities of successful attacks:} 
We show that the gain of rational vs. behavioral investments exists for any combination of baseline probabilities by performing 1000 runs and in each run, for each edge, we draw the baseline probability of successful attack on that edge from a uniform distribution  $p^{0}_{i,j} \sim \mathcal{U}(0,1]$. We consider a symmetric budget distribution and medium security budgets. Figure~\ref{fig:effect_all_probabilites_scad} shows that the gain for rational over behavioral decision-making (with mean 1.53X) exists for any randomly chosen baseline probability of successful attacks. 

\textbf{5) Amount of security budget:} 
We next show that the total system loss of rational defenders is less than that of behavioral defenders for any choice of security budget (as shown in Figure~\ref{fig:effect_all_security_budgets_scada}).

\textbf{6) Security budget distribution among defenders:}
Finally, we analyze the effect of asymmetric budget distribution between the defenders facing the attacker. Figure~\ref{fig:asymm_scada} (in Appendix~\ref{app: evaluation_extended}) illustrates the total loss as a function of the fraction of defender 1's budget. For the individual-defense loss, we observe that the suboptimality of behavioral decision-making is more pronounced with higher budget asymmetries. For example, if defender 1 has 20\% of the total budget, the relative increase in total loss from $\alpha = 1$ to $\alpha = 0.4$ is 25\%. In contrast, the same change of $\alpha$ when the budget is symmetric results in only a 6\% relative increase in the total loss. This observation can be explained by two facts. First, with suboptimal behavioral allocation, the poorer defender wastes even her constrained budget on non-critical edges. Second, the richer player also allocates her resources suboptimally. This leads to this magnified relative increase in losses under budget asymmetry.

In the interest of space, we present system parameters evaluation of DER.1, which has similar insights as SCADA, in Appendix~\ref{app: evaluation_extended}.

\section{Limitations and Discussion} \label{sec:Discussion}

\textbf{Guiding security decision-makers:} We believe that our work opens up a new dimension of {\em intervention} in securing interdependent systems. Our framework allows a quantification of the improvements in security that can be obtained by training security professionals to reduce their behavioral biases. In this context, we can quantitatively show the decision-maker the improvement in system security when moving from her current (sub-optimal) investments to that given by a (rational) algorithm (e.g., \name with $\alpha = 1$). Furthermore, our framework can guide security audits by system operators of large-scale interdependent system, by allowing the operator to investigate subsystems within the system where sub-optimal security investments might have been made by subordinates operating those subsystems. While such an operator may not be able to check every single aspect of every subsystem, she may be able to “zoom in” to portions of the overall system where an audit may be warranted due to evidence of sub-optimality from our framework. 

\noindent \textbf{Behavioral level of the attacker:} We assume that defenders perceive the attacker as non-behavioral; in reality the attacker can be behavioral as well. Our assumption of a  non-behavioral attacker gives the worst case loss  for the system; as a behavioral attacker may not choose the path of true highest vulnerability due to probability misperceptions. This can open the interesting question ``how a rational defender, who uses the security investments recommended by \name, can deceive a behavioral attacker to choose harder attack paths?'' This can help the defender to misguide the attacker and make the target system more secure.

\noindent \textbf{Multi-hop dependence:}   In several cybersecurity scenarios, the ease of an attacker in achieving an attack goal depends not just on the immediate prior attack step but on steps farther back. In such scenarios, the simpler formulation of using probabilities on each edge and assuming independence of the events of traversing the different edges can lead to inaccurate estimates. However, we follow several prior works (e.g.,~\cite{modelo2008determining, xie2010using}) that leveraged the property that in most cases, a node has the highest dependence on the previous node, in order to build computationally tractable analysis tools. Moreover, to handle this issue in our model, the notion of {\em k-hop dependence} \cite{maheshwari2007detecting} can be used, whereby the probability of reaching a particular node depends  nodes up to $k$ hops away. 

\vspace{-2mm}
\section{Related Work}\label{sec:lit-review}

\noindent {\bf Security in interdependent systems}: 
The problem of securing systems with interdependent assets has been handled in several prior works~\cite{modelo2008determining, xie2010using}. The common theme is that a successful attack to one asset may be used to compromise a dependent asset. The notion of attack graphs \cite{homer2013aggregating} is a popular abstraction for capturing the security interdependencies. The specific works differ in what the assets are (physical or virtual, resource-constrained nodes, networking assets, etc.), the level of observability into the states of the assets, and the probabilistic reasoning engine used. Our work here differs from these works 
in that the prior work creates algorithms to make the security control decisions, while we are considering humans with cognitive biases making these decisions.

\noindent{\bf Game-theoretic modeling of security}: 
 Game theory has been used to describe the interactions between attackers and defenders and their effects on system security. 
A commonly used model in this context is that of two-player games, where a single attacker attempts to compromise a system controlled by a single defender \cite{roy2010survey,alpcan2006intrusion}. Game theoretic models have been further used in  \cite{yan2012towards} to study the interaction between one defender and (multiple) attackers attempting Distributed Denial of Service  attacks. 
Game theoretic models have also been proposed for studying critical infrastructure security (See the survey~\cite{laszka2015survey}).
The major difference of our work with all aforementioned literature is that existing work has focused on classical game-theoretic models of rational decision-making, while we analyze behavioral models of decision-making. 

\noindent \textbf{Human behavior in security and privacy:}
Notable departure from classical economic models within the security and privacy literature is \cite{acquisti2009nudging}, which identifies the effects of behavioral decision-making on individual's personal privacy choices. The importance of considering similar models in the study of system security has been recognized in the literature \cite{cranor2008framework}. Prior works~\cite{redmiles2018dancing,anderson2012security} considered models from behavioral economics in the context of security applications. However, these works are based only on psychological studies \cite{anderson2012security}  and human subject experiments \cite{redmiles2018dancing} for end-user. Our work differs from these in that we explore a rigorous mathematical model of defenders' (decision-makers) behavior, model the interaction between multiple defenders (in contrast to the study of only one defender for all of these studies), and consider interdependent assets (in contrast to these studies which reason about binary decisions on isolated assets). To the best of our knowledge, the exceptions that provide a theoretical treatment of behavioral decision-making in certain specific classes of interdependent security games are \cite{7544460,9030279,sanjab2017prospect,abdallah2020behavioral}. These works, however, are theoretical in scope and do not consider the more realistic attack scenarios and types that we consider, do not validate bias of decision-makers via subject experiments, and don't consider multi-round setups or learning algorithms that we consider here. 

\noindent {\bf Multi-round in Security}:
Reinforcement-based learning models have been used in literature where players' strategies receive reinforcement related to the payoffs they earn and adjust their moves over time seeking higher payoffs.
Specifically, \cite{feltovich2000reinforcement} proposed reinforcement learning for an environment with only two possible actions. Such Reinforcement-based learning models have been used in different security applications such as the robustness of smart grid~\cite{ni2019multistage}. Our work differs from these works that we guide the behavioral decision-maker towards rational decision-making where the reinforcements are received from the true loss that the defender accrues when investing with behavioral bias. Likelihood method of discovering attack paths using Bayesian attack graphs has been proposed in \cite{xie2010using}. However, to the best of our knowledge, no previous work has the idea of minimizing the adapted defender's cost and generate optimal allocations each round while weighting attack paths based on previous rounds that we consider in our Hybrid-learning algorithm.

\section{Conclusion}\label{sec:conclusion}

We presented \emph{behavioral security games} to study the effects of human behavioral decision-making on the security of interdependent systems with multiple defenders where we model stepping-stone attacks by the notion of {\it attack graphs}. While behavioral decision-makers tend to allocate their budget across the network, \name helps decision-makers concentrate their budget on critical edges to make the system more secure.  We performed a controlled subject experiment to validate our behavioral model. In multi-round setups, we proposed different learning algorithms to guide behavioral decision-makers towards optimal decisions. We evaluated \name on five real case studies of interdependent systems where we studied the effects of several system parameters. The insights gained from our analysis would be useful for configuring real-world systems with optimal parameter choices and guiding behavioral decision-makers toward rational decision-making that can ultimately lead to improvements in interdependent systems' security.

\bibliographystyle{ACM-Reference-Format}
\bibliography{ccs-sample.bib}

\appendix
\appendix

\section{Convexity of Total Loss Function}\label{sec:proof-convexity}
\begin{lemma}
Let the probability successful function $p_{i,j}(x_{i,j})$ be twice-differentiable and log-convex. Then, the total loss function in \eqref{eq:defender_utility_edge} is convex.
\end{lemma}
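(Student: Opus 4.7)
My plan is to chain three standard operations that preserve convexity, reducing the statement to a single scalar-composition calculation involving the Prelec weighting function.

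First, I would peel off the outer operations in \eqref{eq:defender_utility_edge}. Because each $L_m > 0$ and non-negative linear combinations preserve convexity, it suffices to show, for every critical asset $v_m$, that the inner term $\max_{P\in P_m} \prod_{(v_i,v_j)\in P} w(p_{i,j}(x_{i,j}))$ is convex in $\mathbf{x}$. Since the pointwise maximum of a finite family of convex functions is convex, it further suffices to show that for each fixed path $P$ the product $F_P(\mathbf{x}) := \prod_{(v_i,v_j)\in P} w(p_{i,j}(x_{i,j}))$ is convex.

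Next, I would prove that $F_P$ is log-convex in $\mathbf{x}$ (which implies convexity, since $F_P > 0$). Taking logs,
\begin{equation*}
\log F_P(\mathbf{x}) \;=\; \sum_{(v_i,v_j)\in P} \log w\bigl(p_{i,j}(x_{i,j})\bigr) \;=\; -\sum_{(v_i,v_j)\in P} \bigl(-\log p_{i,j}(x_{i,j})\bigr)^{\alpha},
\end{equation*}
using the Prelec form \eqref{eq:prelec}. Since sums preserve convexity, it remains to show that each edge term $\varphi_{i,j}(x_{i,j}) := -(-\log p_{i,j}(x_{i,j}))^{\alpha}$ is convex; note it depends only on the scalar $x_{i,j}$.

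The key scalar step is the composition argument. By hypothesis $p_{i,j}$ is log-convex, so $\log p_{i,j}(x_{i,j})$ is convex, and therefore $g(x_{i,j}) := -\log p_{i,j}(x_{i,j})$ is concave and non-negative (since $p_{i,j}\in(0,1]$). The map $h(y) := y^{\alpha}$ with $\alpha\in(0,1]$ is concave and non-decreasing on $[0,\infty)$. A standard composition rule then gives that $h\circ g$ is concave: for $\lambda\in[0,1]$, concavity of $g$ yields $g(\lambda x + (1-\lambda)y) \ge \lambda g(x) + (1-\lambda)g(y)$, and applying the non-decreasing concave $h$ gives $h(g(\lambda x + (1-\lambda)y)) \ge h(\lambda g(x) + (1-\lambda)g(y)) \ge \lambda h(g(x)) + (1-\lambda)h(g(y))$. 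Hence $(-\log p_{i,j}(x_{i,j}))^{\alpha}$ is concave in $x_{i,j}$, and $\varphi_{i,j}$ is convex. Summing over edges yields convexity of $\log F_P$, hence log-convexity (and convexity) of $F_P$, and finally convexity of $C_k$ via the max and non-negative sum.

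The main obstacle to watch is the composition step: one must check both monotonicity and the sign/domain conditions so that the concave-of-concave rule actually applies. In particular, log-convexity of $p_{i,j}$ is used exactly to make $-\log p_{i,j}$ concave on the relevant domain, and $p_{i,j}\le 1$ is used to keep $-\log p_{i,j}\ge 0$ so that $h(y)=y^{\alpha}$ is evaluated on its concave regime. A degenerate case where $p_{i,j}(x_{i,j})=1$ at some point (so $g=0$) is harmless, since $h$ extends continuously and concavely to $y=0$. Everything else is bookkeeping with operations that preserve convexity.
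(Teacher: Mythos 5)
Your proof is correct, and its skeleton matches the paper's: both reduce to showing that $(-\log p_{i,j}(x_{i,j}))^{\alpha}$ is concave in the scalar investment, then reassemble via product, max, and non-negative linear combination. The differences are in how the two key steps are discharged. For the scalar step, the paper computes $h''$ explicitly and checks the sign of each term (using $(p')^2 \le p\,p''$ from log-convexity), whereas you invoke the composition rule ``non-decreasing concave of concave is concave'' applied to $y \mapsto y^{\alpha}$ and $-\log p_{i,j}$; these are equivalent in substance, with yours being derivative-free. For the product step, the paper writes $w(p(x)) = g(h(x))$ with $g = \exp(-\cdot)$ to get convexity of each factor and then asserts that the product of the monotone convex factors is convex --- a statement that is not a valid general rule for products of convex functions and really needs the log-convexity of each factor (in separate variables) to go through. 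Your route makes this explicit: you show $\log F_P = -\sum (-\log p_{i,j})^{\alpha}$ is convex and conclude $F_P$ is log-convex, hence convex. So your argument is essentially the paper's, but it closes the one step the paper leaves loose; the paper's version, in exchange, gives the explicit second-derivative identity, which can be reused when analyzing sensitivity parameters.
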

\begin{proof}
 We drop the subscript $i,j$ in the first part of this analysis for better readability. Now, beginning with the probability weighting function defined in \eqref{eq:prelec}, we have $ w(p(x)) = (g \circ h)(x)$, where $ g(x) = \exp(-x) $ and  $ h(x) = (-\log(p(x)))^\alpha$. 
Now, we prove that $h(x)$ is concave.
\vspace{-2mm}
\begin{align*}
h''(x) &= \alpha (\alpha-1) (-\log(p(x)))^{\alpha-2} \frac{(p'(x))^2}{(p(x))^2}\\
& + \alpha (-\log(p(x)))^{\alpha-1} \left[ \frac{(p'(x))^2 - (p(x))(p''(x))}{(p(x))^2} \right].
\end{align*}
Since $0 \leq p(x)\leq 1$, we have $0 \leq -\log(p(x)) \leq \infty $ for all $x$. Moreover, $0 < \alpha \leq 1$ and thus the first term in the R.H.S. of $h''(x)$ is negative. Also, since $p(x)$ is twice-differentiable and log-convex, $(p'(x))^2 < (p(x))(p''(x))$, which ensures that the second term is also negative. Therefore, $h(x)$ is concave.
Since $ g(x) $ is convex and non-increasing while $ h(x) $ is concave, $ w(p(x)) $ is convex. 

Now, since $w(p_{i,j}(x_{i,j}))$ is monotone and convex, thus $ \prod_{(v_{i},v_{j}) \in P} w(p_{i,j}(x_{i,j})) $ is convex. Moreover, the maximum of  a set of convex functions is also convex. Finally, since the total loss function $ C_{k}(x_k,\mathbf{x}_{-k}) $ is a linear combination of convex functions, the total loss function defined in \eqref{eq:defender_utility_edge} is convex.
\end{proof}

\begin{table}[t] 
\caption {A summary of the \name's variables.}%
\resizebox{\columnwidth}{!}
{
 \begin{tabular}{|c | c |} 
 \hline
 Variable & Description  \\ 
 \hline
 $G = (\mathcal{V}, \mathcal{E})$ &  Attack graph of the system with set of nodes $\mathcal{V}$ and the set of edges $\mathcal{E}$\\
  \hline
 $p_{i,j}^{0}$ & Baseline probability of successfully compromising asset $v_{j}$ starting at $v_{i}$\\
\hline
$v_{s}$ & The attacker's source node \\
\hline
$V_{m}$ & The set of all critical assets in the system \\
\hline
$V_{k}$ & The set of critical assets under control of the defender $D_k$ \\
\hline
 $P$ & Directed path from $v_s$ to $v_m\in V_m$\\
\hline
 $P_{m}$ & The set of all directed paths from $v_s$ to
$v_m$ \\
\hline
$D$ & The set of all defenders of the network\\
\hline
 $D_{k}$ & A defender who controls a set of nodes $V_{k}$\\
 \hline
 $\mathcal{E}_k$ & The edge set defended by $D_k$ \\
 \hline
$L_{m}$ & Financial loss of defender $D_{k}$ if the asset $v_{m} \in V_{k}$ is compromised \\
 \hline
 $B_{k}$ & The security budget of defender $D_{k}$\\
 \hline
  $x^{k}_{i,j}$ & The security investment of defender $D_{k}$ on the edge $(v_i,v_j)$ \\
 \hline
 $x_{i,j}$ & Total investments on the edge $(v_i,v_j)$ by all eligible defenders\\
 \hline
  $x_k$ & The vector of investments by defender $D_k$  \\
 \hline
 $\mathbf{x}_{-k}$ & The vector of investments by defenders other than $ D_{k}$ \\      
 \hline
  $s_{i,j}$ & Sensitivity of edge $(v_i,v_j)$ to investments \\
\hline
 $C_{k} (x_k,\mathbf{x}_{-k})$ & Cost (Total loss) function of defender $D_{k}$\\
 \hline
 $p_{i,j} (x_{i,j})$ & True probability of successful attack on edge $(v_i,v_j)$ \\
 & given security investments $ x_{i,j} $ \\
\hline
 $w(p_{i,j} (x_{i,j}))$ & Perceived probability of attack on the edge 
 $(v_i,v_j)$ \\
\hline
\end{tabular}
}
\label{tb:system_variables}
\vspace{-1mm}
\end{table}

\section{Motivational Example with different sensitivities}\label{app:sensitivity}
In the above example, we assumed all edges have the same sensitivity to investments. In cases where critical edges have equal or higher sensitivity than non-critical edges, the same insight as above holds.
Specifically, when edge $(v_i, v_j)$ has sensitivity $s_{i,j}$, one can verify (using KKT conditions) that the optimal investments by a behavioral defender are given by 
\begin{equation*}
\begin{aligned}
x_{1,2} &= x_{2,4} = x_{1,3} = x_{3,4} = 2^{\frac{1}{\alpha-1}} \left(\tfrac{s_{i,j}}{s_{s,1}}\right)^{\frac{\alpha}{1-\alpha}} x_{s,1} .\\
x_{s,1} &= \left(\tfrac{s_{s,1}}{s_{4,5}}\right)^{\frac{\alpha}{1-\alpha}}   x_{4,5};\hspace{1mm} x_{4,5} = B - \sum_{\forall (i,j)\neq (v_4,v_5) } x_{i,j}.
\end{aligned}
\end{equation*}
The insight here is that the investment decision has two dimensions: behavioral level and sensitivity ratio of non-critical edges to critical edges. Specifically, as the defender becomes more behavioral, she puts less investments on edges with higher sensitivity.

\section{Human Subject Demographics}\label{app: human-exp-extended}
The 145 human subjects in our experiment are comprised of 78 males (53.79\%) and 67 females (46.21\%). They belong to various majors on campus, with the three largest being Management/Business (24.8\%), Engineering (24.2\%), and Science (23.5\%). Regarding year in college, 6.9\% are 1st year, 13.1\% are 2nd year, 21.38\% are 3rd year, 35.86\% are 4th year, and 22.76\% are graduate students. Regarding the GPA distribution, 44.83\% have GPAs between 3.5 and 4, 35.17\% between 3 and 3.5, and 17.93\% between 2.5 and 3.

\section{Convergence of Reinforcement learning of behavioral level}\label{app:RL_convergence_analysis}

\begin{lemma}\label{lemma: RL_convergence}
Let $N_1$ and $N_{\alpha_i}$ represent the number of rounds in which the defender chose to invest rationally and with behavioral level $\alpha_i$, respectively. Let $\hat{C}_{opt}$ and $\hat{C}_{i}$ be the total real loss incurred by the defender when investing rationally and with behavioral level $\alpha_i$, respectively. Then, we have
\begin{enumerate}
\item If $q^{0}(\alpha_{i}) = q^{0}(1)$, then Algorithm~\ref{alg:rl} converges to rational behavior in the round $N_R$ if $\frac{\hat{C}_{max} - \hat{C}_{opt}}{\hat{C}_{max} - 
    \hat{C}_{i}} >> \frac{N_{\alpha_i}}{N_1}$.

\item If  $q^{0}(\alpha_{i}) = A_i$ and $q^{0}(1) = B$, then Algorithm~\ref{alg:rl} converges to rational behavior in the round $N_R$ if  $(N_{1} - N_{\alpha_i}) \hat{C}_{max} + N_{\alpha_i} (\hat{C}_{i})  - N_{1} (\hat{C}_{opt})>> A_i - B$. 
\end{enumerate}
\end{lemma}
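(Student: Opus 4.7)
My approach is to track the evolution of the propensities $q^t(\cdot)$ under Algorithm~\ref{alg:rl}, obtain a closed form at round $N_R$, and then translate ``convergence to rational behavior'' into a propensity-dominance condition; the two cases of the lemma then drop out by substituting the two specified sets of initial values. The key observation from the update rule is that $q^{t+1}(\alpha_j) = q^t(\alpha_j)$ whenever $\alpha_j$ is not sampled at round $t$, so the propensity of any fixed behavioral level at round $N_R$ equals its initial value plus the sum of reinforcements collected on exactly those rounds in which that level was drawn. For a fixed attack graph and a deterministic attacker response, the true cost obtained after the defender solves the perceived optimization with a given $\alpha$ is itself deterministic, equal to $\hat{C}_{opt}$ when $\alpha=1$ and to $\hat{C}_i$ when $\alpha=\alpha_i$. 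Consequently each reinforcement $R^t=\hat{C}_{max}-\hat{C}^t_k(x^t_k)$ takes one of the two constant values $\hat{C}_{max}-\hat{C}_{opt}$ or $\hat{C}_{max}-\hat{C}_i$, so
\begin{equation*}
q^{N_R}(1) = q^0(1) + N_1(\hat{C}_{max}-\hat{C}_{opt}), \qquad q^{N_R}(\alpha_i) = q^0(\alpha_i) + N_{\alpha_i}(\hat{C}_{max}-\hat{C}_i).
\end{equation*}

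Next, I would reduce the convergence claim to a gap condition on these propensities. Because $p^{N_R}(1)=q^{N_R}(1)/\sum_{\alpha_j\in\mathcal{\alpha}} q^{N_R}(\alpha_j)$ and the grid $\mathcal{\alpha}$ is finite, the statement $p^{N_R}(1)\to 1$ is equivalent to $q^{N_R}(1)\gg q^{N_R}(\alpha_i)$ holding simultaneously for every $\alpha_i\neq 1$. Substituting the two closed forms above into this inequality and specializing to the two choices of initial values then gives the lemma's two bounds. In Case~1, the common initial value $q^0(1)=q^0(\alpha_i)$ cancels, and since $\hat{C}_{max}\ge\hat{C}_i$ we may divide both sides to obtain
\begin{equation*}
\frac{\hat{C}_{max}-\hat{C}_{opt}}{\hat{C}_{max}-\hat{C}_i} \;\gg\; \frac{N_{\alpha_i}}{N_1}.
\end{equation*}
In Case~2, substituting $q^0(1)=B$ and $q^0(\alpha_i)=A_i$ into the gap inequality and collecting the cost terms on the left and the initial constants on the right yields $(N_1-N_{\alpha_i})\hat{C}_{max}+N_{\alpha_i}\hat{C}_i-N_1\hat{C}_{opt}\gg A_i-B$, which is the second stated bound.

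\textbf{Main obstacle.} The algebra itself is short; the delicate point is justifying the reduction in the second step. The dominance must hold uniformly across every non-rational level in $\mathcal{\alpha}$, not merely pairwise, in order to force $p^{N_R}(1)$ close to $1$; I would handle this by invoking finiteness of $|\mathcal{\alpha}|$, so that a uniform ``$\gg$'' over the finitely many $\alpha_i\neq 1$ is sufficient. A secondary subtlety is the tacit assumption that the true cost incurred on every round in which a given $\alpha$ is sampled is the same constant $\hat{C}_{\cdot}$; this is precisely where the deterministic attacker response enters. I would therefore state this explicitly as the regime in which the lemma is claimed, noting that for a randomizing attacker the counts $N_1$ and $N_{\alpha_i}$ would need to be replaced by expected reinforcements along with a concentration argument, which is outside the scope of the current statement.
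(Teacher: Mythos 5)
Your proposal is correct and follows essentially the same route as the paper's own proof: compute the closed-form propensities $q^{N_R}(1)$ and $q^{N_R}(\alpha_i)$ from the update rule, reduce convergence to the dominance condition $q^{N_R}(1)\gg q^{N_R}(\alpha_i)$, and substitute the two initializations to obtain the stated bounds. Your added remarks on uniformity over the finite grid of behavioral levels and on the deterministic-cost assumption are reasonable clarifications of points the paper leaves implicit, but they do not change the argument.
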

\begin{proof}
We first calculate the propensities for each behavioral level. From Algorithm~\ref{alg:rl}, we have
\begin{align*}
   q^{N_R}(1)  &= q^{0}(1) + N_1 (\hat{C}_{max} - \hat{C}_{opt}) \\
   q^{N_R}(\alpha_{i})  &= q^{0}(\alpha_{i}) + N_{\alpha_i} (\hat{C}_{max} - \hat{C}_{i})
\end{align*}

\noindent (i) To reach convergence, $\forall \alpha_{i} \neq 1$ and $q^{0}(\alpha_{i}) = q^{0}(1)$, we have
\begin{align*}
q^{N_R}(1) >> q^{N_R}(\alpha_{i}) 
\iff & N_1 (\hat{C}_{max} - \hat{C}_{opt}) >> N_{\alpha_i} (\hat{C}_{max} - \hat{C}_{i})  \\
\iff & \frac{\hat{C}_{max} - \hat{C}_{opt}}{\hat{C}_{max} - \hat{C}_{i}} >> \frac{N_{\alpha_i}}{N_1}
\end{align*}
(ii) With a similar argument to (i),  $\forall \alpha_{i}  \neq  1 $ and $q^{0}(\alpha_{i}) \neq q^{0}(1)$ where $q^{0}(\alpha_{i}) = A_i$ and $q^{0}(1) = B$, we have 
\begin{small}
\begin{align*}
q^{N_R}(1) >> q^{N_R}(\alpha_{i}) \\
\iff & (N_{1} - N_{i}) \hat{C}_{max} + N_i (\hat{C}_{i})  - N_{1} (\hat{C}_{opt})>> A_i - B
\end{align*}
\end{small}
Note that in all of the possible cases, $\hat{C}_{i} - \hat{C}_{opt} >> 0$ ensures convergence under any choice of $A$ and $B \neq 0$ which is realistic where the real loss associated with suboptimal investments decisions is much higher compared to the real loss associated with optimal (i.e., rational) investments decisions. 
\end{proof}

\begin{table}
\caption {Baseline probability of successful attack for vulnerabilities in SCADA and DER.1 failure scenarios.}
\label{tbl:cvss_cve_der_scada}
\centering
\resizebox{\columnwidth}{!}
{%
\begin{tabular}{|l|l|l|l|}
\hline
\multicolumn{1}{|l|}{\text{\bf Vulnerability (CVE-ID) }}
& \multicolumn{1}{l|}{\bf Edge(s)}
& \multicolumn{1}{l|}{\bf Attack Vector}
& \multicolumn{1}{l|}{\bf Score}\\
\cline{1-4}
\hline
\multicolumn{4}{|l|}{\bf SCADA application} \\ 
\hline
Control Unit (CVE-2018-5313) & (Vendor,Control1),(Vendor,Control2) & Local & 0.78 \\
\hline
Remote authentication (CVE-2010-4732) & (S, Vendor) & Network  & 0.9 \\  
\hline
Remote cmd injection (CVE-2011-1566) &  (Control,RTU1),(Control,RTU2) & Network  &  1.0\\
\hline
Authentication bypassing (CVE-2019-6519) & (Corp,DMZ1),(Corp,DMZ2) & Network & 0.75\\
\hline
\multicolumn{4}{|l|}{\bf DER.1 application} \\ 
\hline
Physical access (CVE-2017-10125) & ($w_{9},w_{7}$),($w_{18},w_{16}$) & Physical & 0.71 \\  
\hline
Network access (CVE-2019-2413) & ($w_{9},w_{8}$),($w_{18},w_{17}$) & Network & 0.61 \\ 
\hline
Software access (CVE-2018-2791) & ($w_{7},w_{6}$),($w_{8},w_{6}$) & Network & 0.82 \\ 
\hline
Sending cmd (CVE-2018-1000093) & ($w_{6},w_{5}$),($w_{15},w_{14}$) & Network &  0.88 \\ 
\hline
\end{tabular}%
}%
\end{table}

\section{Attack Scenarios of Case Studies}\label{app:systems_overview}
In this section, we provide explanations of the system overview and equivalent attack graph for the DER.1, E-commerce, VOIP, and IEEE 300-BUS, respectively.
\begin{figure*}
\begin{minipage}[t]{\linewidth}
\begin{minipage}{0.45\linewidth}
  \centering
  \includegraphics[width=\linewidth]{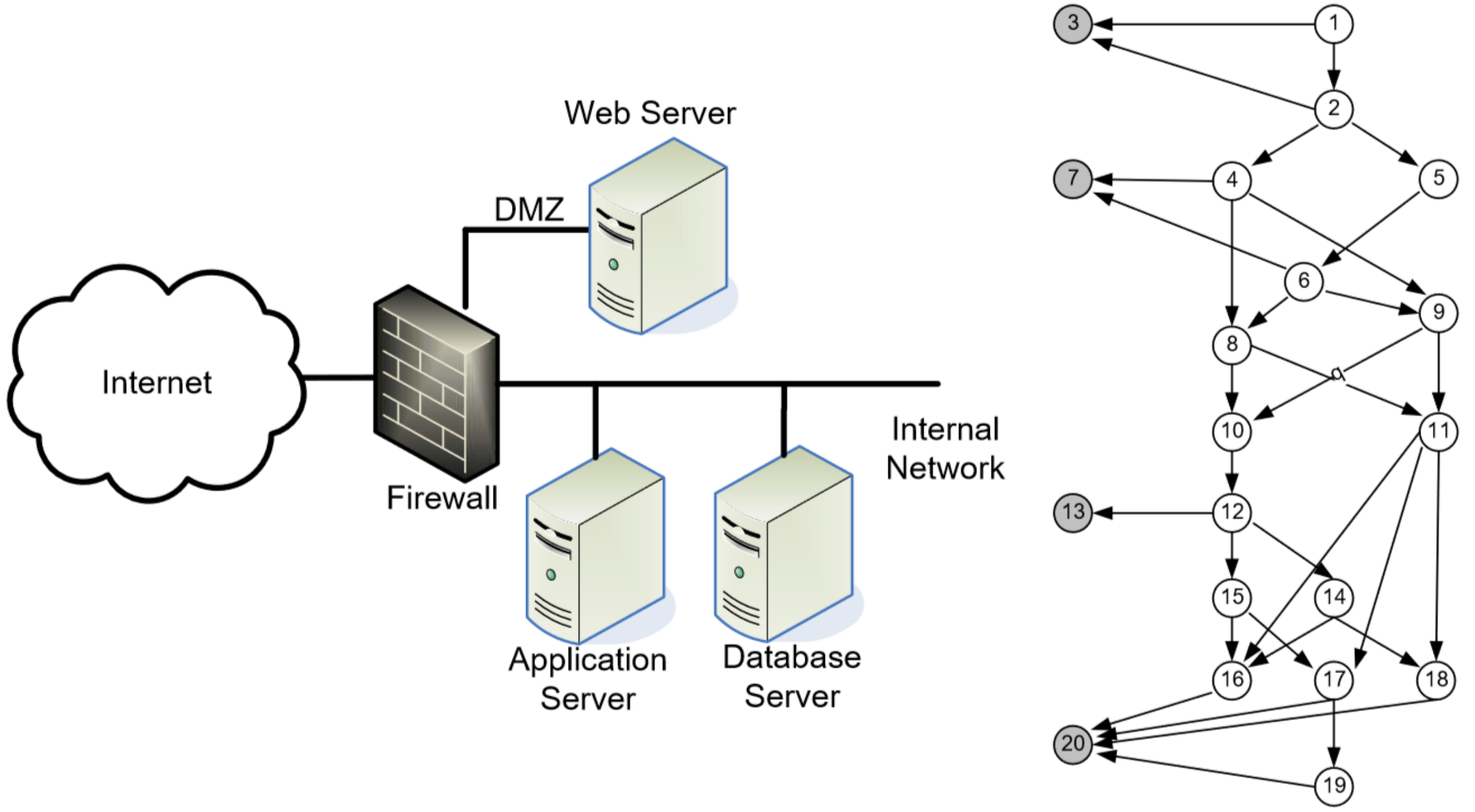}
  \caption{A high level network overview of the E-commerce (on the left) adapted from \cite{modelo2008determining}. The resultant attack graph (on the right).}
  \label{fig:Ecommerce_High_level_overview}
\end{minipage} \hfill
\begin{minipage}{0.45\linewidth}
\centering
  \includegraphics[width=0.9\linewidth]{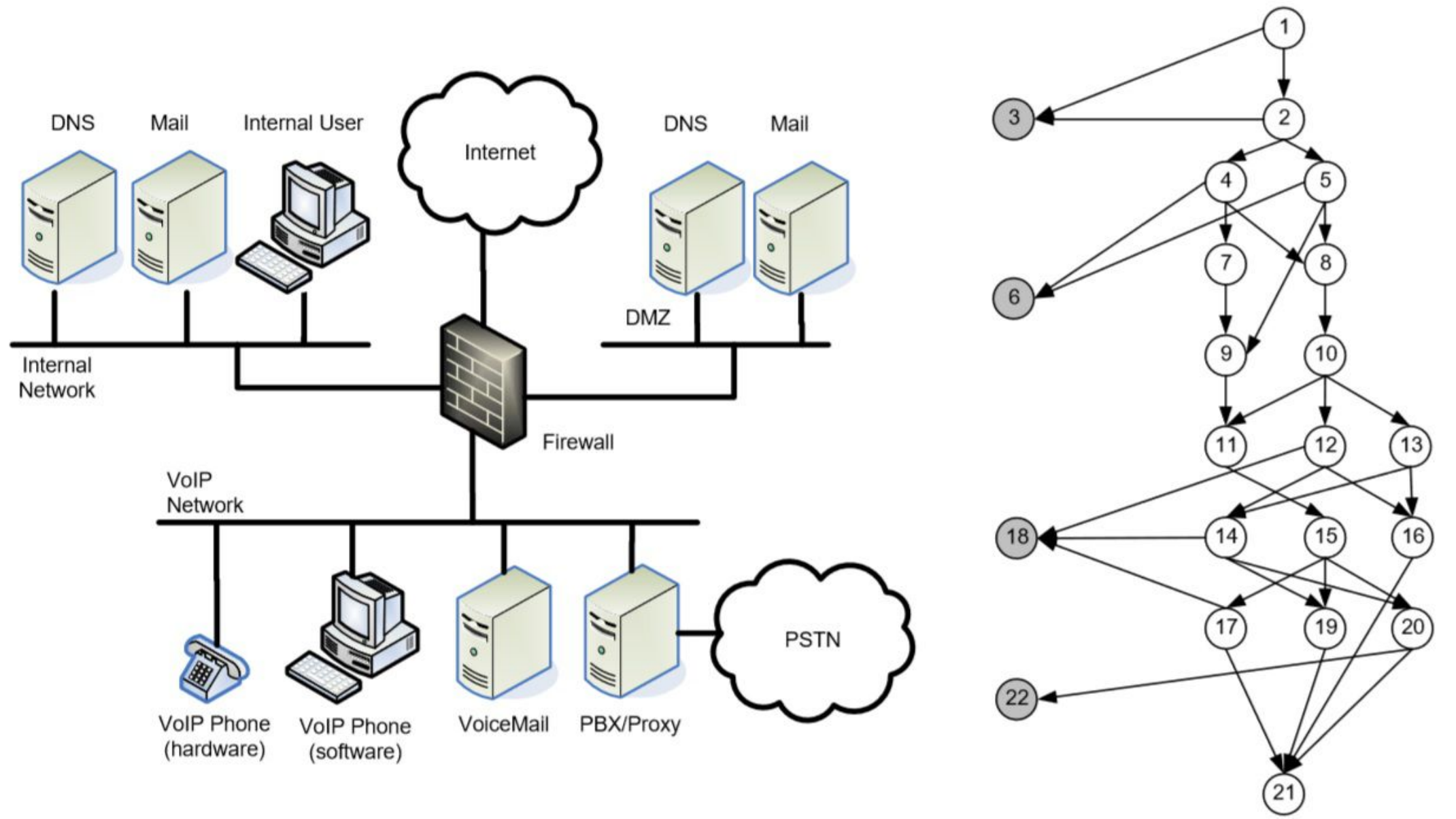}
  \caption{A high level network overview of the VoIP (on the left) adapted from \cite{modelo2008determining} and its resultant attack graph  (on the right).}
  \label{fig:voip_High_level_overview}
\end{minipage} 
\end{minipage}
\end{figure*}

\begin{figure*}
\begin{minipage}[t]{\linewidth}
\begin{minipage}{0.47\linewidth}
\centering
 \includegraphics[width=\linewidth]{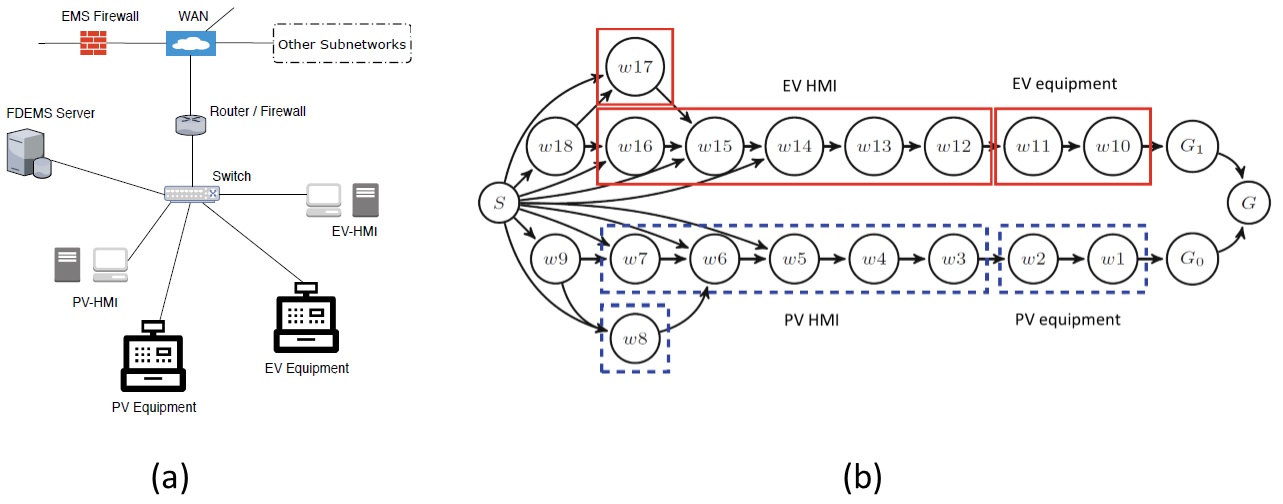}
  \caption{ A network level overview in (a) of a Distributed Energy Resource (DER) system (adapted from \cite{jauhar2015model}). In (b), the equivalent attack graph of that failure scenario is shown.}
  \label{fig:System_overview_DER1}
\end{minipage}\hfill
\begin{minipage}{0.47\linewidth}
\centering
 \includegraphics[width=0.8\linewidth]{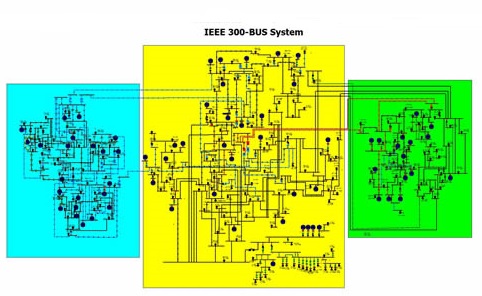}
  \caption{ A high level overview of the IEEE 300- BUS (adapted from \cite{khanabadi2012optimal}). Each area has a different color.}
  \label{fig:System_overview_300BUS}
\end{minipage}
\end{minipage}
\end{figure*}

\subsection{DER.1}
 
\textbf{System Description:} The US National Electric Sector Cybersecurity Organization Resource (NESCOR) Technical Working Group has proposed a framework for evaluating the risks of cyber attacks on the electric grid. A distributed energy resource (DER) is described as a cyber-physical system consisting of entities such as generators, storage devices, and electric vehicles, that are part of the energy distribution system. The DER.1 failure scenario has been identified as the riskiest failure scenario affecting distributed energy resources according to the NESCOR ranking. As shown in Figure~\ref{fig:System_overview_DER1}, there are two critical equipment assets: a PhotoVoltaic (PV) generator and an electric vehicle (EV) charging station. Each piece of equipment is accompanied by a Human Machine Interface (HMI), the only gateway through which the equipment can be controlled. The DER.1 failure scenario is triggered when the attacker gets access to the HMI. The vulnerability of the system may arise due to various reasons, such as hacking of the HMI, or an insider attack. Once the attacker gets access to the system, she changes the DER settings and gets physical access to the DER equipment so that they continue to provide power even during a power system fault. Through this manipulation, the attacker can cause physical damage to the system. 

\subsection{E-commerce}
\textbf{System Description:} In that E-commerce system (shown in Figure~\ref{fig:Ecommerce_High_level_overview}), all servers are running a Unix-based operating system. The web server sits in a demilitarized zone (DMZ) separated by a firewall from the other two servers, which are connected to a network not accessible from the Internet. All connections from the Internet and through servers are controlled by the firewall. Rules state that the web and application servers can communicate, as well as the web server can be reached from the Internet. Here, the attacker is assumed to be an external one and thus her starting point is the Internet which uses stepping-stone attacks with the goal of having access to the MySQL database (specifically access customer confidential data such as credit card information), represented by node $19$ in the attack graph. For this system, we follow the attack graph generated by \cite{modelo2008determining} (shown in Figure~\ref{fig:Ecommerce_High_level_overview}), which is based on the vulnerabilities associated with specific versions of the particular software, and are taken from popular databases. 

\subsection{VoIP}

\textbf{System Description:} As shown in Figure~\ref{fig:voip_High_level_overview}, the VoIP system is composed of three zones; a DMZ for the servers accessible from the internet cloud, an internal network for local resources (e.g., computers, mail server and DNS server), and an internal network that is consisted of only VoIP components. This architecture follows the security NIST guidelines for deploying a secure VoIP system. In this context, the VoIP network consists of a Proxy, voicemail server and software-based and hardware-based phones. The firewall has the rules to control the traffic between the three zones. Note that the DNS and mail servers in the DMZ are the only accessible hosts from the Internet. The PBX server can route calls to the Internet or to a public-switched telephone network (PSTN). The ultimate goal of this multi-stage attack is to eavesdrop on VoIP communication.

\subsection{IEEE 300 BUS}
\textbf{System Description:} Finally, we consider the widely used benchmark IEEE 300 bus power grid network \cite{khanabadi2012optimal}. We define the network itself as the interdependency graph where each node represents a bus (i.e., the network has 300 nodes), and the physical interconnection between the buses represent the edges. Each bus has generators and/or load centers associated with it. As shown in Figure~\ref{fig:System_overview_300BUS}, the 300 bus network data divides the buses or nodes into 3 different regions containing 159, 78 and 63 nodes respectively. We assume that each region is managed by an independent entity or defender. The defenders want to protect the buses within their region that contain the generators; each generator
is valued at its maximum generation capacity. The attacker can directly access three nodes (specifically, bus 39, 245 and 272).

\begin{figure}[t] 
\centering
  \includegraphics[width=0.9\linewidth]{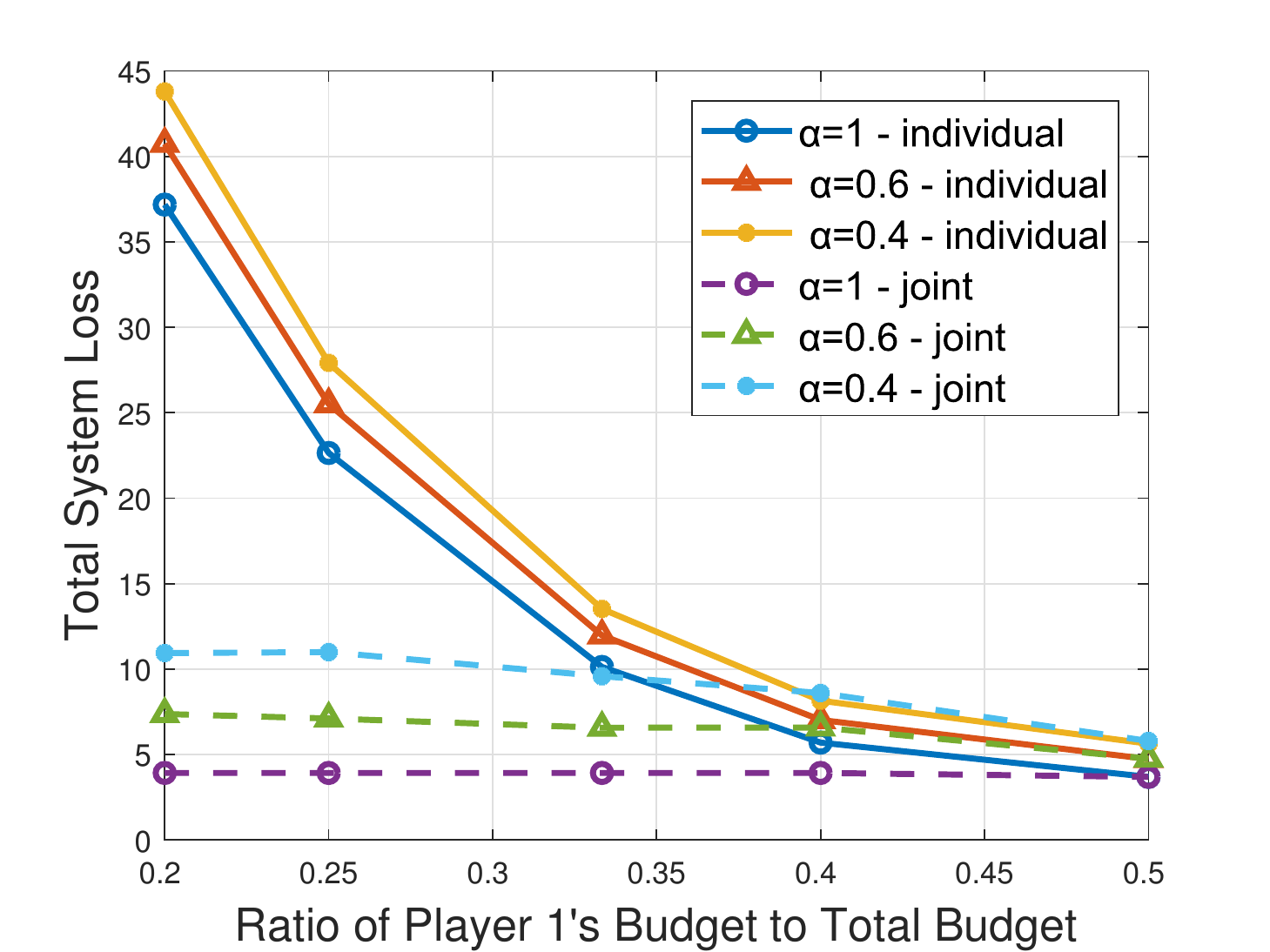}
 \caption{The total system loss as a function of the fraction of defender 1's budget. We observe that joint defense outperforms individual defense at higher budget asymmetry.}
  \label{fig:asymm_scada}
\end{figure}

\section{Evaluation-Extended}\label{app: evaluation_extended}

\subsection{Muti-Defenders: DER.1}
We present the system parameters results (shown in Figure~\ref{fig:extended_DER}) for the DER.1 interdependent system. We observe similar insights to SCADA's results (Section~\ref{sec:evaluation}) and the remaining systems.

\begin{figure}
\begin{tabular}{cc}
  \includegraphics[width=40mm]{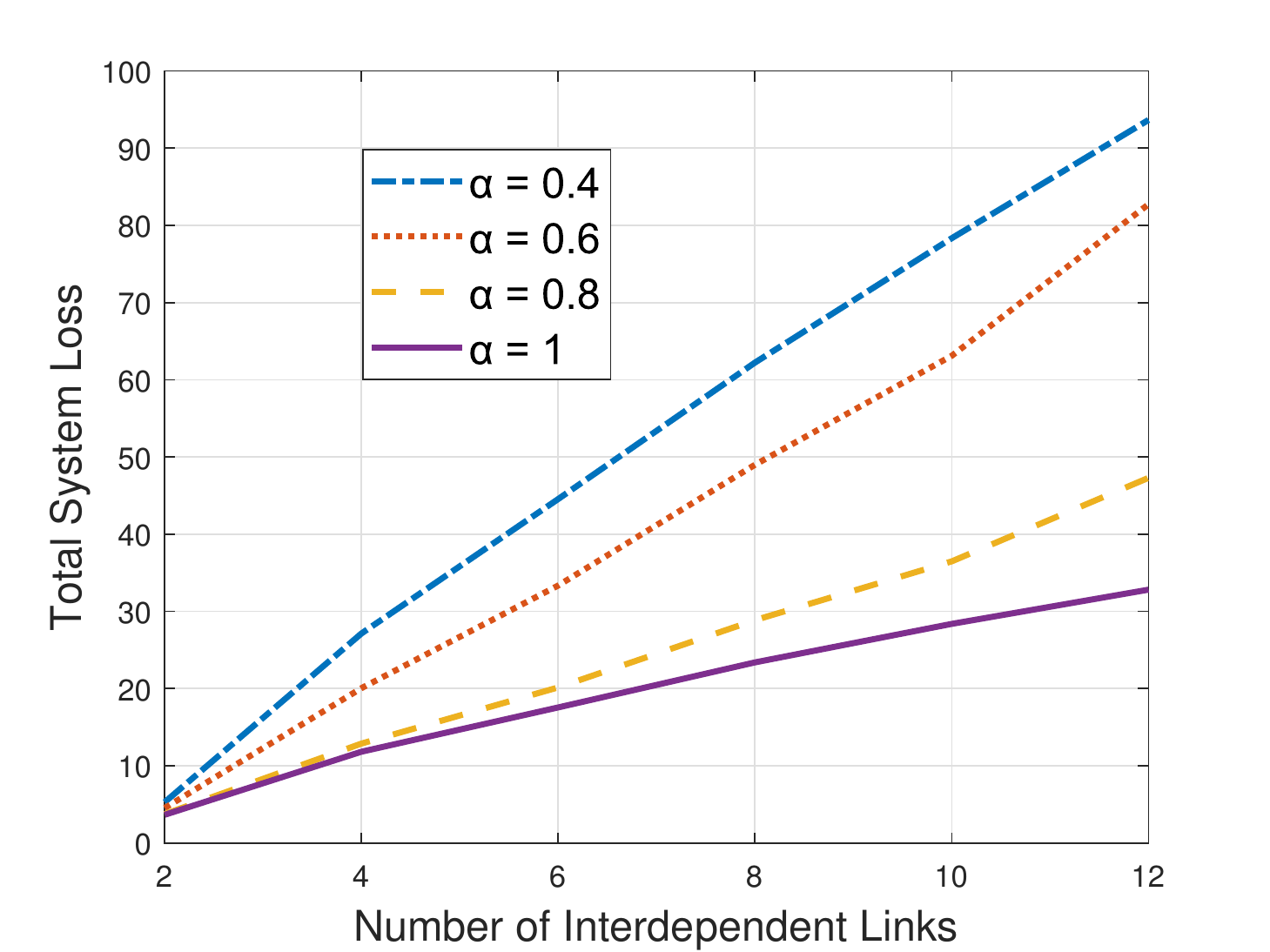} & \includegraphics[width=40mm]{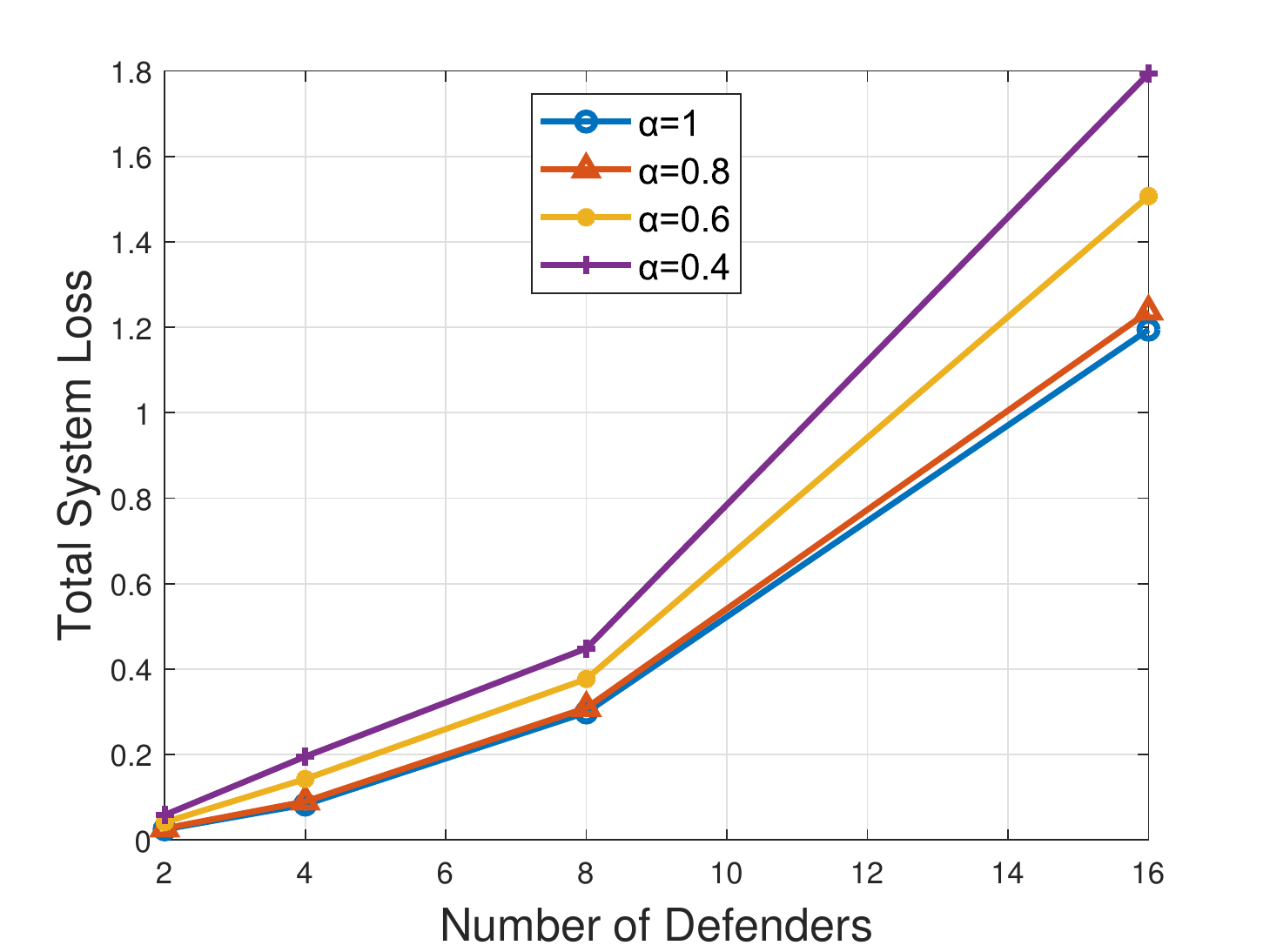} \\
(a) Interdependency Effect & (b) Number of Defenders \\[6pt]
 \includegraphics[width=40mm]{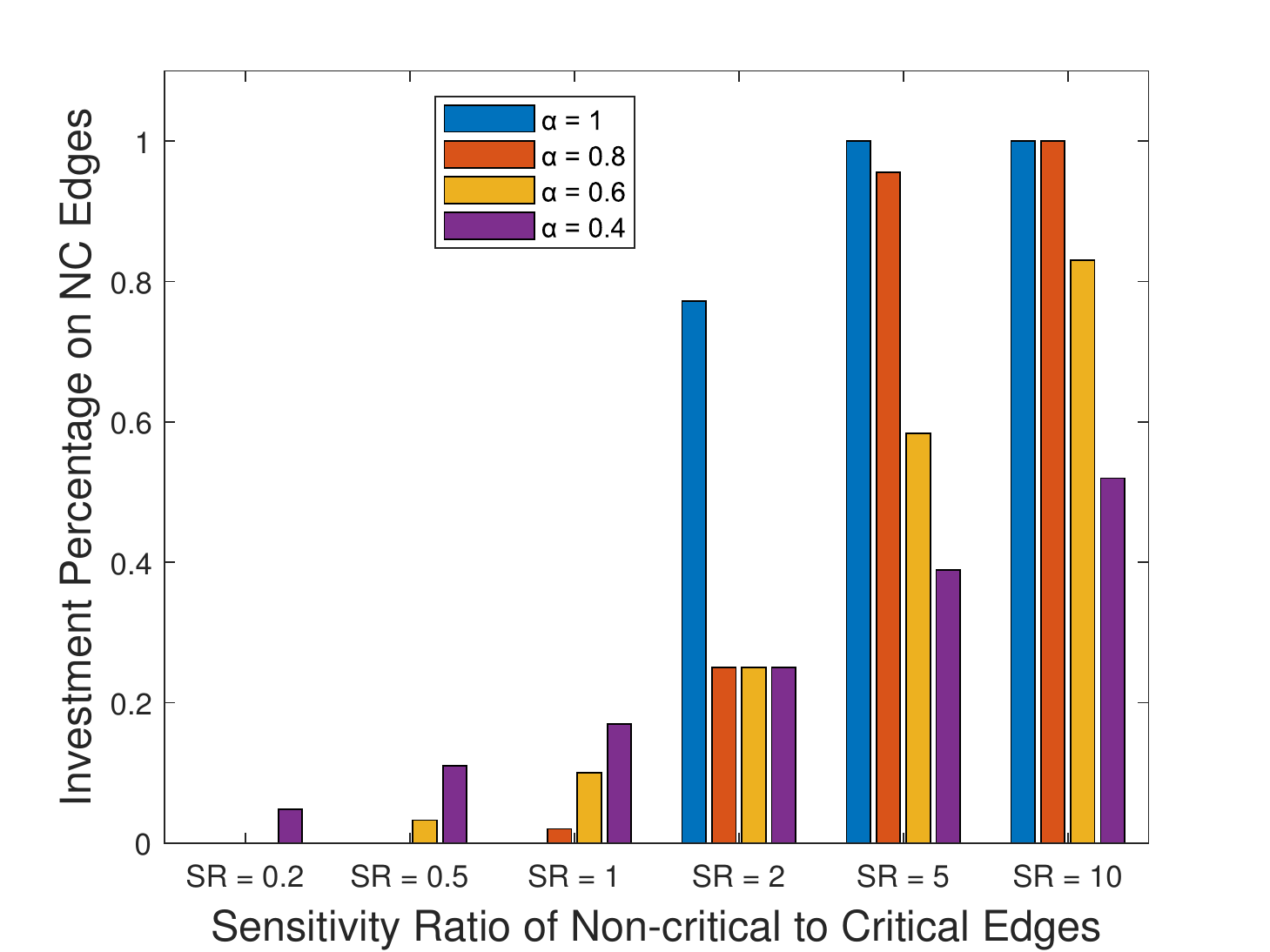} &   \includegraphics[width=40mm]{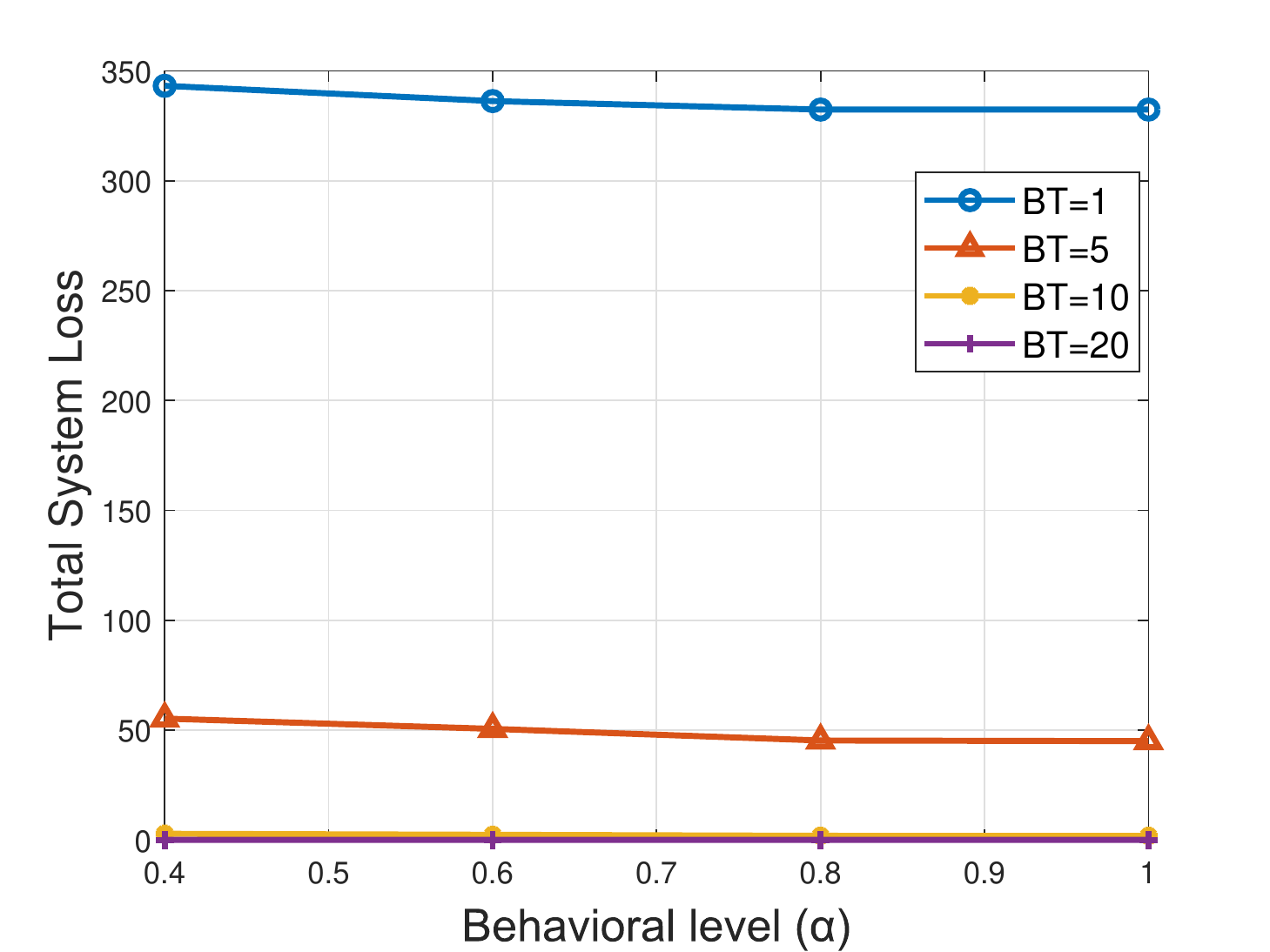} \\
(c) Sensitivity of Edges & (d) Security Budget \\[6pt]
\end{tabular}
\caption{Results of Multi-defenders for DER.1 system.}
\label{fig:extended_DER}
\end{figure}

\begin{figure}[t]
\begin{minipage}[t]{.45\textwidth}
\centering
   \includegraphics[width=0.9\linewidth]{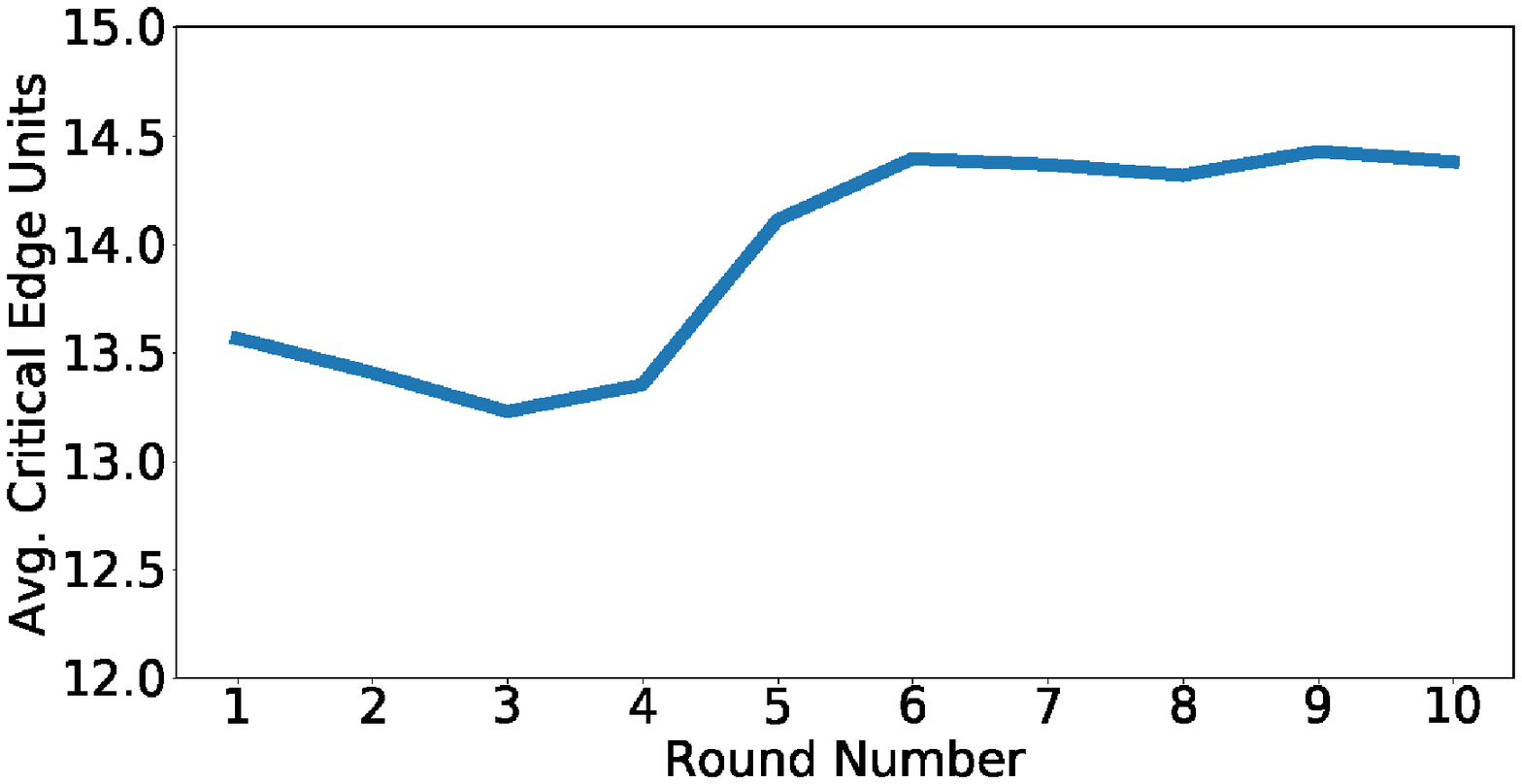}
  \caption{Average of all subjects' investments on the critical edge vs experiment rounds. The upward trend indicates that on average, subjects are learning.}
  \label{fig:agg_multirounds}
\end{minipage}
\begin{minipage}[t]{.45\textwidth}
\centering
   \includegraphics[width=0.9\linewidth]{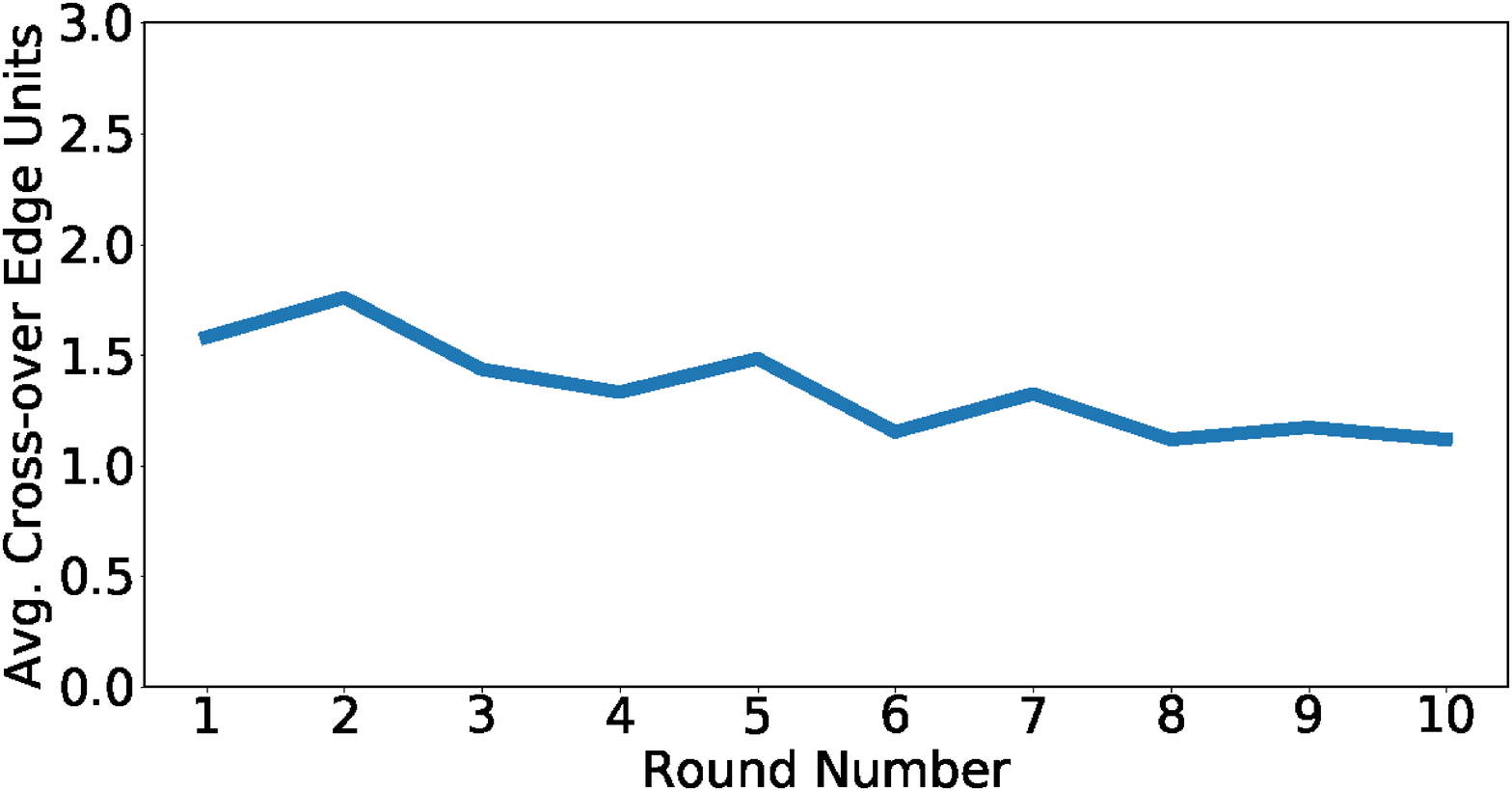}
  \caption{Average of all subjects' investments on the cross-over edge vs experiment rounds. There is only a weak  downward trend in spreading behavior.}
  \label{fig:agg_multirounds_blue}
\end{minipage}
\end{figure}

\subsection{Average Investments of Multi-rounds}
Here, we show the average investments for each round for both of the attack graphs tested in our human subject study. 
\subsubsection{Emulating Reinforcement Learning:}
Note that we emulated partially the reinforcement learning environment where in each round after the subject allocates her investments, a simulated attack is run and we show the subject if the critical asset was compromised or not and give her experimental points if she successfully defended the asset.

\end{document}